\documentclass{llncs}

\usepackage{localmacros} 

\title{Semantic A-translations and Super-consistency entail Classical
  Cut Elimination} \author{Lisa Allali\inst{1} \and 
  Olivier Hermant \inst{2}} \institute{\'{E}cole Polytechnique, INRIA \&
  R\'egion Ile de France \email{allali@lix.polytechnique.fr} \and CRI,
  MINES ParisTech \email{olivier.hermant@mines-paristech.fr}}

\begin{document}

\maketitle

\begin{abstract}
We show that if a theory R defined by a rewrite system is
super-consistent, the classical sequent calculus modulo R enjoys the
cut elimination property, which was an open question. For such
theories it was already known that proofs strongly normalize in
natural deduction modulo R, and that cut elimination holds in the
intuitionistic sequent calculus modulo R.

We first define a syntactic and a semantic version of Friedman's
A-translation, showing that it preserves the structure of pseudo-Heyting
algebra, our semantic framework. Then we relate the interpretation of
a theory in the A-translated algebra and its A-translation in the
original algebra. This allows to show the stability of the
super-consistency criterion and the cut elimination theorem.\\

{\bf Keywords: } Deduction modulo, cut elimination, A-translation,
pseudo-Heyting algebra, super-consistency.
\end{abstract}

\section{Introduction} \label{sec:intro}
Deduction Modulo is a formalism that aims at separating computation
from reasoning in proofs by making inferences {\em modulo} some
congruence. This congruence is generated by rewrite rules on terms and
on propositions, and, assuming confluence and
termination, it is decidable by blind computation (normalization).

Rewrite rules on propositions is a key feature, allowing to
express in a first-order setting {\em without} any axiom theories such
as higher-order logic \cite{HOLls,PNM} or arithmetic
\cite{ARITH}. Reasoning without axioms turns out to be a critical
advantage for automated theorem provers \cite{JA12,RB04,BH06a,GBur10}
to not get lost during proof-search.

As a counterpart, fundamental properties such as cut elimination
become a hard challenge. At the same time it is needed at both
theoretical (consistency issues, e.g.) and practical levels, for
instance to ensure the completeness of the proof-search
algorithm of the aforementioned theorem provers. In the general case,
it does not hold and this is why new techniques have been developed in
order to ensure cut elimination for the widest possible range of
rewrite systems.

Anticipating the definitions of \mysec{\ref{sec:defs}}, let us
give two examples (see also \mysec{\ref{sec:sc-stab}}) to illustrate
the failure of cut elimination and/or normalization in general. For
terminating (and confluent) examples, see \cite{OHer05a}. The
congruence generated by the rewrite system $P \rew P \limp Q$ enables
to prove the sequent $\vdash Q$ with a cut and this proof is neither
normalizable in Natural deduction (the $\lambda$-term $(\lambda
x.x~x)~(\lambda x.x~x)$, that represents the aforementioned proof is
typable) nor admits cut \cite{PNM}. Instantiating $Q$ by $P$ yields
the rewrite system $P \rew P \limp P$. This allows for the same
non-normalizing proof, while $\vdash A$ becomes provable in only two
steps and without cut ; more generally, semantic  means \cite{OHer05a}
show that in this case cut is admissible, showing the independence of
normalization and  cut elimination. All those questions are
undecidable \cite{GBurCKir10}.\\

A first path to solve this problem, investigated in \cite{PNM}, is to
show that a congruence has a reducibility candidate-valued model. Then
any proof normalizes in natural deduction modulo this congruence. This
propagates to cut elimination in intuitionistic sequent calculus
modulo, but fails to directly extend to classical sequent calculus
modulo. To fix this, a second derived criterion is proposed.

A second way is {\em super-consistency}, a notion developed in
\cite{TVA} that is a semantic criterion independent from reducibility
candidates. It assumes the existence, for a given congruence, of a
model for {\em any} pseudo-Heyting algebra. Since the reducibility
candidates model of \cite{PNM} is an instance of pseudo-Heyting
algebra, this criterion implies that of \cite{PNM}, and all of its
normalization / cut elimination corollaries. So this suffers the same
drawback. A recent work \cite{ABruOHerCHou11} has also extended the
criterion to the classical case, but still requires a modification of
the criterion - specifically, pseudo-Heyting algebras become
pre-Boolean algebras.

The beauty of super-consistency is that it is not hardwired for a
particular deduction system. That is why it should indifferently prove 
cut elimination for the natural deduction, the intuitionistic
as well as the classical sequent calculus. This is exactly what
show here: cut-elimination for the classical sequent calculus modulo
a given congruence, assuming the {\em unmodified} congruence has the
{\em unmodified} super-consistency property.\\

After giving the definitions one would need to keep the paper as much
self contained as possible, we introduce shortly the deduction modulo,
relying on a basic knowledge first-order logic. Then we present the
$A$-translation of propositions and rewrite systems \cite{PNM},
inspired by Friedman's $A$-translation \cite{FRIEDMAN}, a refinement
of double-negation translations, that bridges the intuitionistic and
the classical worlds.

The core of the paper resides in the lifting of this translation on
pseudo-Heyting algebras, at the semantic level. After verifying that
all properties are preserved, we show that super-consistency is stable
by $A$-translation: the rewrite system has a model in the translated
algebra, so the translated rewrite system has a model in the original
algebra.

Those results allow us to deduce that super-consistency is
sufficient to prove cut-elimination in classical sequent calculus,
propagating the normalization property of natural deduction modulo to
cut elimination in intuitionistic and eventually classical sequent
calculus, following \cite{PNM}.

\section{Definitions} \label{sec:defs}

\subsection{Pseudo-Heyting Algebra}

\begin{definition}[pseudo-Heyting algebra (pHA,
    \cite{TVA})]\label{def:pHA}
Let $\calB$ be a set and $\leq$ a relation on it, $\calA$ and $\calE$
be subsets of $\wp(\calB)$, $\stop$ and $\sbot$ be elements of
$\calB$, $\simp$, $\sand$, and $\sor$ be functions from
$\calB \times \calB$ to $\calB$, $\sfa$ be a function from $\calA$
to $\calB$ and $\sex$ be a function from $\calE$ to $\calB$.  The
structure $\tildeB = \langle \calB, \leq, \calA, \calE, \stop,
\sbot, \simp, \sand, \sor, \sfa, \sex\rangle$
is said to be a {\em pseudo-Heyting algebra} if for
all $a$, $b$, $c$ in $\calB$, $A$ in $\calA$ and $E$ in $\calE$:
\begin{enumerate}
\item $a \leq a$ and if $a \leq b$, $b \leq c$ then $a \leq c$
  ($\leq$ is a pre-order), 
\item $a \leq \stop$ and $\sbot \leq a$ (maximum and minimum element),
\item $a \sand b \leq a$, $a \sand b \leq b$ and if $c
  \leq a$, $c \leq b$ then $c \leq a \sand b$,
\item $a \leq a \sor b$, $b \leq a \sor b$ and if $a \leq c$, $b \leq
  c$ then $a \sor b \leq c$,
\item for any $x \in A$, $\sfa A \leq x$ and if for
  any $x \in A$, $b \leq x$ then $b \leq \sfa A$,
\item for any $x \in E$, $x \leq \sex E$ and if for any $x \in E$,
  $x \leq b$ then $\sex E \leq b$,
\item $a \leq b \simp c$ iff $a \sand b \leq c$.
\end{enumerate}
\end{definition}

Axioms for $\sand$ and $\sfa$ (resp. $\sor$ and $\sex$) confer them
the property of a greatest lower bound (resp. lowest upper bound),
while the unicity of the latters is not guaranteed, since $\leq$ is
not anntisymmetric. Another guise of pHAs are Truth Value Algebras
\cite{TVA}. Also, $\sand$ and $\sor$ are
easily shown to be pre-commutative ($a \sand b \leq\geq b \sand a$)
and pre-associative.

\begin{definition}[Full \cite{TVA}]\label{def:full}
A pseudo-Heyting algebra is said to be {\em full} if $\calA = \calE =
\wp(\calB)$, {\em i.e.} if $\sfa~A$ and $\sex~A$ are defined for all
$A \subset \calB$.
\end{definition}

In this paper, all the pHA considered are full. When the pre-order
is antisymmetric, then a full pHA is exactly a complete HA, in the
terminology of \cite{ATroDvDal88}. In this paper, complete refers to
the order $\eqK$ described below.

\begin{definition}[Ordered pseudo-Heyting algebra]\label{def:order}
A pseudo-Heyting algebra $\tildeB$ is called {\em ordered} if it is
equipped with an additional {\em order} relation $\eqK$ on
$\calB$ such that 
\begin{itemize}
\item $\eqK$ is a refinement of $\leq$, {\em i.e.} if $a \eqK b$ then
  $a \leq b$,
\item $\stop$ is a maximal element,
\item $\sand$, $\sor$, $\sfa$ and $\sex$ are
  monotonous, $\simp$ is left anti-monotonous and right
  monotonous.
\end{itemize}
\end{definition}

\mydef{\ref{def:order}} is an adapation to pHA of the corresponding 
definition of \cite{TVA}. The ``refinement condition'' is shown in
\cite{TVA} to be a derived property (\myprop{4}), but it is in fact
trivially equivalent to the closure condition of $\Bplus$.

\begin{definition}[Complete ordered pseudo-Heyting algebra
    \cite{TVA}] \label{def:complete} 
An ordered pseudo-Heyting algebra $\tildeB$ is said to be {\em complete}
if every subset of $\calB$ has a greatest lower bound for $\eqK$.
Notice that this implies that every subset also has a least upper
bound. We write $glb(a,b)$ and $lub(a,b)$ the greatest lower bound and
the least upper bound of $a$ and $b$ for the order $\eqK$.
\end{definition}

The order relation $\eqK$ does not define a Heyting algebra order and,
if by chance it does, the Heyting algebra operations may be different
from those of $\tildeB$ .



\subsection{Rewrite System}

We work in usual predicate logic. Terms are variables and applied
function symbols along their arity. Propositions are atoms
(applied predicate symbols along their arity), and compound
propositions with the help of connectives $\land, \lor, \limp, \top, 
\bot$ and quantifiers $\lfa, \lex$. $\alpha$-equivalent propositions
are identified. To avoid parenthesis, $\limp$ and $\simp$
are considered to be {\em left} associative, therefore $A \limp B
\limp B$ reads $(A \limp B) \limp B$. Negation is not a primitive
connective, and is defined by $A \limp \bot$.

\begin{definition}[Proposition rewrite rule]
We call {\em proposition rewrite rule} any rule $P \rew A$ rewriting
atomic propositions $P$ into an arbitrary proposition $A$ such that
$\FV(A) \subseteq \FV(P)$.
\end{definition}


\begin{definition}[Proposition rewrite system]
We define a {\em proposition rewrite system} as an orthogonal
\cite{TeReSe03}, hence confluent, set of proposition rewrite rules. The
congruence generated by this rewrite system is noted $\equiv$.
\end{definition}

\subsection{Interpretation}


\begin{definition}[$\tildeB$-valued structure \cite{TVA}]\label{def:structure}
Let $\calL = \langle f_i, P_j \rangle$ be a language in predicate
logic and $\tildeB$ be a pHA, a {\em $\tildeB$-valued
  structure} $\calM = \langle \calM, \tildeB, \hat{f}_i, \hat{P}_j
\rangle$ for the language $\calL$ is a structure such that 
$\hat{f_i}$ is a function from $\calM^n$ to $\calM$ where $n$ is the
arity of the symbol $f_i$ and $\hat{P_j}$ is a function from $\calM^n$
to $\calB$, the domain of $\tildeB$, where $n$ is the arity of the
symbol $P_i$.
\end{definition}

\begin{definition}[Denotation \cite{TVA}]
Let $\tildeB$ be a pHA, $\calM$ be a $\tildeB$-valued
structure and $\phi$ be an assignment, i.e. a function associating
elements of $\calM$ to variables. The denotation in $\calM$ of a
proposition $A$ or of a term $t$ is defined as:

\begin{itemize}
\item $\interp{ x } = \phi(x)$,
\item $\interp{ f(t_1, ..., t_n) } = \hat{f}(\interp{ t_1}, ...,
  \interp{t_n})$,
\item $\interp{ P(t_1, ..., t_n) } = \hat{P}(\interp{ t_1}, ...,
  \interp{ t_n})$,
\item $\interp{ \top } = \stop$,
\item $\interp{ \bot } = \sbot$,
\item $\interp{ A \Rightarrow B } = \interp{ A } ~\simp~ \interp{
  B }$,
\item
$\interp{ A \wedge B } = \interp{ A } ~\sand~ \interp{ B }$,
\item
$\interp{ A \vee B } = \interp{ A } ~\sor~ \interp{ B }$,
\item $\interp{ \lfa x~A } = \sfa~\{ \interpbis{ A }~|~e \in \calM\}$,
\item $\interp{ \lex x~A } = \sex~\{ \interpbis{ A }~|~e \in \calM\}$.
\end{itemize}
The denotation of a proposition containing quantifiers is always
defined if the pHA is full, otherwise it may be undefined.
\end{definition}

\begin{definition}[Model \cite{TVA}]
The $\tildeB$-valued structure $\calM$ is said to be {\em a model of}
a rewrite system $R$ if for any two propositions $A, B$ such that $A
\equiv B$, $\interpinb{A} = \interpinb{B}$.
\end{definition}

Soundness and completeness hold \cite{TVA}: the sequent $\Gamma \vdash
B$ is provable if and only if $\interpinb{\G} \leq \interpinb{B}$ for
any pseudo-Heyting algebra $\tildeB$ and any model interpretation for
$R$ in $\tildeB$.
The direct way is an usual induction \cite{TVA}, while the converse
is a direct consequence of the completeness theorem with respect to
Heyting algebra. For instance one can construct the Lindenbaum
algebra \cite{TVA}, or a context-based algebra \cite{OHerJLip08b}. 

\subsection{Classical Sequent Calculus Modulo}

\myfig{\ref{fig:seq-cal}} recalls the classical sequent calculus
modulo. It depends on a congruence $\equiv$ determined by a fixed
rewrite system $R$. If $R$ is empty $\equiv$ boils down to
syntactic equality and we get usual sequent calculus. The
intuitionistic sequent calculus modulo has the same rules,
except that the right-hand sides of sequents contain at most {\em one}
proposition. Two rules are impacted: $\lor$-r splits into two rules
$\lor_1$ and $\lor_2$, and, in the right premiss of the $\limp$-left
rule, $\Delta$ is overwritten by $A$.

\begin{center}
\begin{figure}[htbp]
\scriptsize
\begin{tabular}{|rcl|}
\hline & &  \\
\multicolumn{3}{|c|}{\bf{ identity group }}\\
\AXC{}
\LeftLabel{{axiom}, $A \equiv B$}
\UICm{A \vdash B}
\DP
& ~~~~~~ & 
\AXCm{\G \vdash A, \D}
\AXCm{\G, B \vdash \D}
\RightLabel{{cut}, $A \equiv B$}
\BICm{\G \vdash \D}
\DP\\& ~~~~~~ &\\
\multicolumn{3}{|c|}{\bf{ logical group }}\\
\AXCm{\Gamma, A, B \vdash \Delta}
\LeftLabel{{$\land$-l}, $C \equiv A \land B$}
\UICm{\Gamma, C \vdash \Delta}
\DP
& & 
\AXCm{\Gamma \vdash A, \Delta}
\AXCm{\Gamma \vdash B, \Delta}
\RightLabel{{$\land$-r}, $C \equiv A \land B$~~}
\BICm{\Gamma \vdash C, \Delta}
\DP
\\& &\\
\AXCm{\Gamma, A \vdash \Delta}
\AXCm{\Gamma, B \vdash \Delta}
\LeftLabel{{$\lor$-l}, $C \equiv A \lor B$}
\BICm{\Gamma, C \vdash \Delta}
\DP
& & 
\AXCm{\Gamma \vdash A, B, \Delta}
\RightLabel{{$\lor$-r}, $C \equiv A \lor B$}
\UICm{\Gamma \vdash C, \Delta}
\DP
\\& &\\
\AXCm{\Gamma, B \vdash \Delta}
\AXCm{\Gamma \vdash A, \Delta}
\LeftLabel{~~{$\limp$-l}, $C \equiv A \limp B$}
\BICm{\Gamma, C \vdash \Delta}
\DP
& & 
\AXCm{\Gamma, A \vdash B, \Delta}
\RightLabel{{$\limp$-r}, $C \equiv A \limp B$}
\UICm{\Gamma \vdash C, \Delta}
\DP
\\& &\\
\AXCm{}
\LeftLabel{{$\bot$-l}, $A \equiv \bot$}
\UICm{A \vdash}
\DP
& & 
\AXCm{}
\RightLabel{{$\top$-r}, $A \equiv \top$}
\UICm{\vdash A}
\DP
\\& &\\
\AXCm{\Gamma, \{t/x\}A \vdash \Delta}
\LeftLabel{{$\lfa$-l}, $B \equiv \lfa{x} A$}
\UICm{\Gamma, B \vdash \Delta}
\DP
& & 
\AXCm{\Gamma \vdash A, \Delta}
\RightLabel{{$\lfa$-r}, $B \equiv \lfa{x} A$, $x$ fresh}
\UICm{\Gamma \vdash B, \Delta}
\DP
\\& &\\
\AXCm{\Gamma, A \vdash \Delta}
\LeftLabel{~~~{$\lex$-l},  $B \equiv \lex{x} A$, $x$ fresh}
\UICm{\Gamma, B \vdash \Delta}
\DP
& & 
\AXCm{\Gamma \vdash \{t/x\}A, \Delta}
\RightLabel{{$\lex$-r}, $B \equiv  \lex{x} A$}
\UICm{\Gamma \vdash B, \Delta}
\DP
\\& &\\
\multicolumn{3}{|c|}{\bf{ structural group }}\\
\AXCm{\Gamma, B_1, B_2 \vdash \Delta}
\LeftLabel{{contr-l}, $A \equiv B_1 \equiv B_2$}
\UICm{\Gamma, A \vdash \Delta}
\DP
& & 
\AXCm{\Gamma \vdash B_1, B_2, \Delta}
\RightLabel{{contr-r}, $A \equiv B_1 \equiv B_2$}
\UICm{\Gamma \vdash A, \Delta}
\DP
\\& &\\
\AXCm{\Gamma \vdash \Delta}
\LeftLabel{{weak-l}}
\UICm{\Gamma, A \vdash \Delta}
\DP
& & 
\AXCm{\Gamma \vdash \Delta}
\RightLabel{{weak-r}}
\UICm{\Gamma \vdash A, \Delta}
\DP\\& &\\
\hline
\end{tabular}
\caption{Classical sequent calculus modulo}
\label{fig:seq-cal}
\end{figure}
\end{center}

\subsection{Super-consistency}

\begin{definition}[Super-consistency \cite{TVA}] \label{def:sc} 
A rewrite system $R$ (a congruence $\equiv$) in deduction modulo is 
{\em super-consistent} if it has a $\tildeB$-valued model for all
full, ordered and complete pseudo-Heyting algebra $\tildeB$.
\end{definition}

Super-consistency is akin to consistency with respect to {\em all}
pHA. Note that the choice of the structure
(\mydef{\ref{def:structure}}) is open. Considering only HA is not
enough, as the rewrite system $P \rew P \limp P$ devised in 
\mysec{\ref{sec:intro}} , as well as the one of 
\mysec{\ref{sec:sc-stab}} would then be super-consistent but not
normalizing.

\section{A-translations}

Instead of first performing a negative translation \cite{PNM} and
then the proper $A$-translation, as in the original work of Friedman
\cite{FRIEDMAN}, we consider a variant of the composition of both.

\subsection{Syntactic Translation of a Proposition}

\begin{definition}[A-translation of a proposition]\label{def:atrans-prop}

Let $B$ be a proposition. Let $A$ be a proposition in which free 
variables are not bound by quantifiers in $B$. $A$ is said
$B$-unbound. We let $\aprop{B}{A}$ be:
\begin{itemize} 
\item $\aprop{B}{A}=B$ if $B$ is atomic,
\item $\aprop{\top}{A}=\top$,
\item $\aprop{\bot}{A}=\bot$,
\item $\aprop{(B \limp C)}{A} = (\aprop{B}{A} \dimpa) \limp
  (\aprop{C}{A} \dimpa)$,
\item $\aprop{(B\wedge C)}{A}=(\aprop{B}{A} \dimpa) \wedge (\aprop{C}{A}
  \dimpa)$,
\item $\aprop{(B \vee C)}{A} = (\aprop{B}{A} \dimpa) \vee (\aprop{C}{A}
  \dimpa)$,
\item $\aprop{(\lfa x~B)}{A}=\lfa x~(\aprop{B}{A} \dimpa)$,
\item $\aprop{(\lex x~B)}{A}=\lex x~(\aprop{B}{A} \dimpa)$.
\end{itemize}
\end{definition}
\begin{remark}
Kolmogorov's double negation translation \cite{PNM} of $B$ is
$\neg\neg \aprop{B}{\bot}$. As well as this translation has been
simplified by G\"odel, Gentzen and others \cite{GOD,GEN,ATroDvDal88},
we can also simplify \mydef{\ref{def:atrans-prop}} so that it
introduces less $A$.

\end{remark}

\begin{definition}[A-translation of a rewrite system]
Let $R = \{ P_i \rew A_i \}$ be a proposition rewrite system and $A$
be a formula that is $A_i$-unbound for all $i$. We define its
$A$-translation, written $\aprop{R}{A}$, as $\{ P_i \rew
\aprop{A_i}{A} \}$.
\end{definition}

\subsection{Semantic $a$-translation of a pHA}

We now lift the $A$-translation process at the semantic level.

\begin{definition}[Semantic $a$-translation] \label{def:a-trans-sem}
Let $\tildeB$ be the full pseudo-Heyting algebra $\langle \calB, \leq,
\wp(\calB), \wp(\calB), \stop, \sbot, \simp, \sand, \sor, \sfa,
\sex\rangle$ and let $a \in \calB$.

We let $\asem{\tildeB}{a}$ be the structure $\langle \calB, 
\aleq, \wp(\calB), \wp(\calB), \atop, \abot, \aimp, \aand,
\aor, \afa, \aex \rangle$, that we call the $a$-translation of
$\tildeB$, where:
\begin{itemize}
\item $b \aleq c$ iff $b \dsimpa \leq c \dsimpa$,
\item $\atop \triangleq \stop$,
\item $\abot \triangleq \sbot$,
\item $b \aimp c \triangleq ((b \dsimpa) \simp (c \dsimpa))$,
\item $b \aand c \triangleq ((b \dsimpa) \sand (c \dsimpa))$,
\item $b \aor c \triangleq  ((b \dsimpa) \sor (c \dsimpa))$,
\item $\afa A \triangleq ( \sfa~ (A \dsimpa))$,
\item $\aex A \triangleq (\sex~(A \dsimpa))$.
\end{itemize}
with the convention that, for any $A \subseteq \calB$, $A \dsimpa = \{
b \dsimpa ~|~ b \in A\}$.
\end{definition}

We may straightforwardly check that  $\langle \calB, 
\aleq, \wp(\calB), \wp(\calB), \atop, \abot, \aimp, \aand,
\aor, \afa, \aex \rangle$ is a valid {\em structure}, in the sense that
$\aleq$ operators are well-defined; in particular $\afa$ and $\aex$
are defined for any subset of $\calB$. We show below that it is also a
full, ordered and complete pHA.

\section{Results}

\subsection{On the $a$-translation of a pHA}

We recall some useful facts about the semantic implication that hold
in pseudo-Heyting algebras:
\begin{proposition} \label{prop:eq-imp}
Let $\calB$ be a pHA and $a, b, c \in \tildeB$ such that $b \leq
c$. Then:
\begin{eqnarray}
b & \leq & a \simp b \label{eq:imp-1}\\
a \simp b \sand a & \leq & b \label{eq:imp-2}\\
b & \leq & b \simp a \simp a \label{eq:imp-5}\\
a \simp b & \leq & a \simp c \label{eq:imp-3}\\
c \simp a & \leq & b \simp a \label{eq:imp-4}\\
b \dsimpa & \leq & c \dsimpa \label{eq:imp-7}\\
b \simp a \dsimpa & \leq & b \simp a \label{eq:imp-6}
\end{eqnarray}
\end{proposition}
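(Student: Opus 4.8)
The plan is to prove each of the seven inequalities in turn, using only the defining axioms of a pHA (Definition~\ref{def:pHA}), in particular the adjunction $a \leq b \simp c \iff a \sand b \leq c$ of clause~(7), together with the lattice properties of $\sand$ and $\sor$. Throughout I will freely use pre-commutativity of $\sand$, reflexivity and transitivity of $\leq$, and the fact that $\sand$ is monotone (which itself follows from clause~(3)).

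First, for~\eqref{eq:imp-1}: from $a \sand b \leq b$ (clause~(3), using pre-commutativity), the adjunction immediately gives $b \leq a \simp b$ — wait, more carefully, we need $b \sand a \leq b$, which is clause~(3), and then $b \leq a \simp b$ by clause~(7). For~\eqref{eq:imp-2}: we have $a \simp b \leq a \simp b$ by reflexivity, so by the adjunction $(a \simp b) \sand a \leq b$; pre-commutativity rewrites the left side as $a \simp b \sand a$ (recalling $\simp$ and $\sand$ associativity conventions). For~\eqref{eq:imp-5}: instantiate with the goal $b \leq (b \simp a) \simp a$; by the adjunction this is equivalent to $b \sand (b \simp a) \leq a$, which is~\eqref{eq:imp-2} up to pre-commutativity. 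For~\eqref{eq:imp-3}: from $(a \simp b) \sand a \leq b \leq c$ (using~\eqref{eq:imp-2} and the hypothesis $b \leq c$ with transitivity), the adjunction gives $a \simp b \leq a \simp c$. For~\eqref{eq:imp-4}: from $(c \simp a) \sand b \leq (c \simp a) \sand c$ (monotonicity of $\sand$ in its right argument, using $b \leq c$) and $(c \simp a) \sand c \leq a$ (\eqref{eq:imp-2}), transitivity gives $(c \simp a) \sand b \leq a$, hence $c \simp a \leq b \simp a$ by the adjunction.

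The last two are the composite facts involving $\dsimpa$, i.e. involving the doubled implication $\cdot \simp a \simp a$ that underlies the $A$-translation. For~\eqref{eq:imp-7}, $b \dsimpa \leq c \dsimpa$: unfolding, $b \dsimpa = (b \simp a) \simp a$, and we want $(b \simp a) \simp a \leq (c \simp a) \simp a$; by~\eqref{eq:imp-4} applied with $b \leq c$ we get $c \simp a \leq b \simp a$, and then applying~\eqref{eq:imp-4} \emph{again} (now with the roles of ``$b \leq c$'' played by $c \simp a \leq b \simp a$, and the fixed right argument $a$) yields $(b \simp a) \simp a \leq (c \simp a) \simp a$. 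So~\eqref{eq:imp-7} is just~\eqref{eq:imp-4} applied twice — the double contravariance composing to covariance. Finally, for~\eqref{eq:imp-6}, $b \simp a \dsimpa \leq b \simp a$: here $b \simp a \dsimpa$ parses as $b \simp ((a \simp a) \simp \ldots)$ — no, given left-associativity of $\simp$ it parses as $(b \simp a) \dsimpa = ((b \simp a) \simp a) \simp a$, and we want $((b \simp a) \simp a) \simp a \leq b \simp a$. Set $x = b \simp a$; the claim is $((x \simp a) \simp a) \simp a \leq x \simp a$, i.e. $x \dsimpa \simp a \leq x \simp a$. By~\eqref{eq:imp-5} we have $x \leq x \dsimpa$, hence by~\eqref{eq:imp-4} (contravariance in the left argument of $\simp$, with fixed right argument $a$) $x \dsimpa \simp a \leq x \simp a$, which is exactly what we want.

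I do not anticipate a serious obstacle here; the only delicate points are bookkeeping ones — getting the parsing of the $\dsimpa$ and left-associative $\simp$ right, and being careful that every appeal to monotonicity of $\sand$ or to~\eqref{eq:imp-4} respects the variance. The genuinely content-bearing observations are small: \eqref{eq:imp-5} is the ``unit of the double-negation-style monad'' $b \leq b\dsimpa$ specialised one level, \eqref{eq:imp-7} is the composition of two contravariances, and \eqref{eq:imp-6} says the triple iterate collapses to the single one, i.e. $x\dsimpa\simp a = x \simp a$ up to $\leq\geq$ (only one direction is needed). All of these are standard facts about the continuation/double-negation operator in any (pseudo-)Heyting algebra, so the proof is a routine verification from the adjunction.
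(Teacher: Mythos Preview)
Your proof is correct and follows essentially the same approach as the paper, which declares the first six items ``standard, using the definition of $\simp$'' and only spells out \eqref{eq:imp-6}, deriving it from \eqref{eq:imp-5} and the adjunction (your use of \eqref{eq:imp-4} in place of the raw adjunction is a harmless repackaging of the same step). One bookkeeping slip to fix: in your treatment of \eqref{eq:imp-6} you should set $x = b$, not $x = b \simp a$; with $x = b$ your displayed goal $((x \simp a)\simp a)\simp a \leq x \simp a$ is precisely the target $((b\simp a)\simp a)\simp a \leq b\simp a$, and your argument ``\eqref{eq:imp-5} gives $x \leq x\dsimpa$, then \eqref{eq:imp-4} gives $x\dsimpa \simp a \leq x \simp a$'' goes through verbatim.
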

\begin{proof}
Standard, using the definition of $\simp$. Let us show \ref{eq:imp-6}:
by \ref{eq:imp-5} $b \leq b \dsimpa \leq b \dsimpa \dsimpa$. Then by
definition of $\simp$ we get first $(b \simp a \dsimpa) \sand b \leq
a$ and then $b \simp a \dsimpa \leq b \simp a$. \qed{}
\end{proof}

\begin{proposition}\label{BTVA}
If $\tildeB$ is a full pHA then its $a$-translation $\aB$ is a full
pHA.
\end{proposition}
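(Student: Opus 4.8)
The plan is to verify, one by one, the seven axioms of \mydef{\ref{def:pHA}} for the structure $\aB = \langle \calB, \aleq, \wp(\calB), \wp(\calB), \atop, \abot, \aimp, \aand, \aor, \afa, \aex\rangle$, and then observe that fullness is immediate since $\calA = \calE = \wp(\calB)$ already holds for $\aB$ by construction. The governing idea throughout is that every $\aB$-operator is obtained by first applying the order-reflecting, idempotent-up-to-$\leq$ map $(-)\dsimpa = (-\simp a)\simp a$ to its arguments and to its output, and that $\aleq$ is exactly $\leq$ pulled back along this map. So each axiom for $\aB$ should reduce, after unfolding the definitions, to a statement in $\tildeB$ that follows from the pHA axioms for $\tildeB$ together with the elementary facts collected in \myprop{\ref{prop:eq-imp}} — chiefly \ref{eq:imp-7} (monotonicity of $(-)\dsimpa$) and \ref{eq:imp-5} (the unit $b \leq b\dsimpa$), and the characteristic adjunction $a \leq b\simp c \iff a\sand b \leq c$.

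Concretely, I would proceed as follows. For axiom 1 (pre-order): reflexivity of $\aleq$ is reflexivity of $\leq$ on $b\dsimpa$; transitivity likewise transfers directly. For axiom 2: $b \aleq \atop$ unfolds to $b\dsimpa \leq \stop\dsimpa$, which holds since $\stop$ is the top; dually for $\abot$, using that $\sbot \leq \sbot\dsimpa$ is refined by $\sbot\dsimpa$ being $\geq \sbot$ and $\leq$ the top — more carefully, $\abot \aleq b$ means $\sbot\dsimpa \leq b\dsimpa$, and $\sbot\dsimpa = (\sbot\simp a)\simp a = \stop \simp a \simp a$; one checks $\stop\simp a \simp a \leq b\dsimpa$ by the adjunction, or simply notes $\sbot\dsimpa \leq \stop = \atop$ is not yet what we want, so instead use \ref{eq:imp-1}/\ref{eq:imp-5} to get $\sbot\dsimpa \leq b\dsimpa$ whenever — hmm, this needs $\sbot\dsimpa$ to be below everything, which is false in general. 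The honest route here is: $\abot$ need not be a least element for $\aleq$; rather one should take $\abot$ so that $\abot\dsimpa \leq b\dsimpa$ for all $b$, and since $\sbot\dsimpa = \stop\simp a\simp a$ and $b\dsimpa = (b\simp a)\simp a$, the inequality $\stop\simp a\simp a \leq (b\simp a)\simp a$ follows from $b\simp a \leq \stop\simp a$ — wait, that is $c\simp a \leq b\simp a$ from $b\leq c$ applied with $b\mapsto b\simp a$ and $c \mapsto \stop\simp a$, needing $\stop\simp a \leq b\simp a$, i.e. \ref{eq:imp-4} with $b\mapsto b$, $c\mapsto\stop$: from $b\leq\stop$ we get $\stop\simp a\leq b\simp a$. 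Good — so axiom 2 goes through. Axioms 3, 4 (meet, join): $a\sand b\leq a$ transfers to $(a\dsimpa\sand b\dsimpa)\dsimpa \leq a\dsimpa\dsimpa$... here is the subtlety — $\aand$ outputs $(a\dsimpa \sand b\dsimpa)$ \emph{without} a further $\dsimpa$? No: rereading \mydef{\ref{def:a-trans-sem}}, $b\aand c \triangleq (b\dsimpa)\sand(c\dsimpa)$, with no outer $\dsimpa$. So $a\aand b \aleq a$ unfolds to $((a\dsimpa\sand b\dsimpa))\dsimpa \leq a\dsimpa$, and one uses $a\dsimpa\sand b\dsimpa \leq a\dsimpa$ in $\tildeB$, then $(-)\dsimpa$ monotone gives $(a\dsimpa\sand b\dsimpa)\dsimpa \leq a\dsimpa\dsimpa$, then one needs $a\dsimpa\dsimpa \leq a\dsimpa$, i.e. an \emph{idempotence} direction $x\dsimpa\dsimpa \leq x\dsimpa$ which is the nontrivial half — but $x\dsimpa\dsimpa = x\dsimpa$ up to $\leq\geq$ is standard: $\geq$ is \ref{eq:imp-5}, and $\leq$ comes from applying $(-)\simp a$ contravariantly to $x\leq x\dsimpa$... that flips to $x\dsimpa\simp a \leq x\simp a$, then $(-)\simp a$ again gives $x\simp a\simp a \leq x\dsimpa\simp a\simp a$, the wrong direction. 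So the key auxiliary lemma I must prove is $x\dsimpa\dsimpa \leq\geq x\dsimpa$; the $\leq$ direction follows from $x\simp a \leq x\dsimpa\simp a$ (apply \ref{eq:imp-4} to $x\leq x\dsimpa$, no wait \ref{eq:imp-4} is $c\simp a\leq b\simp a$ from $b\leq c$, so from $x\leq x\dsimpa$ get $x\dsimpa\simp a\leq x\simp a$), then once more contravariantly $x\simp a\simp a \leq x\dsimpa\simp a\simp a$... still wrong. Correct: $x\dsimpa\simp a \leq x\simp a$ and then apply $(-)\simp a$ \emph{monotonically}? $\simp$ is right-monotone in the ordered structure but not in a bare pHA. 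Since $\tildeB$ is only assumed a pHA here (ordering comes in the next result), I'll instead prove $x\dsimpa\dsimpa \leq x\dsimpa$ directly from the adjunction: it suffices that $(x\dsimpa\simp a)\sand(x\dsimpa\simp a) \leq$... rather, $(x\dsimpa\dsimpa)\sand(x\dsimpa\simp a)\leq a$ by \ref{eq:imp-2}, and separately $x\dsimpa\simp a \leq x\simp a$, whence $(x\dsimpa\dsimpa)\sand(x\simp a) \leq$? — this requires left-monotonicity of $\sand$, which \emph{is} a pHA fact (axiom 3). So $(x\dsimpa\dsimpa)\sand(x\simp a)\leq (x\dsimpa\dsimpa)\sand(x\dsimpa\simp a)\leq a$, hence $x\dsimpa\dsimpa\leq (x\simp a)\simp a = x\dsimpa$. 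This auxiliary fact, call it $(\star)$, is what I would isolate first.

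With $(\star)$ in hand — i.e. $x\dsimpa\dsimpa \leq\geq x\dsimpa$ in any pHA — the remaining axioms are mechanical. For the universal (axiom 5): $\afa A \aleq x$ for $x\in A$ unfolds to $(\sfa(A\dsimpa))\dsimpa \leq x\dsimpa$; since $x\dsimpa \in A\dsimpa$, $\sfa(A\dsimpa)\leq x\dsimpa$, apply $(-)\dsimpa$ then $(\star)$; the converse half ($b\aleq x$ for all $x\in A$ implies $b\aleq\afa A$) uses that $b\dsimpa \leq x\dsimpa$ for all $x\dsimpa\in A\dsimpa$ gives $b\dsimpa\leq\sfa(A\dsimpa)$, then monotone $(-)\dsimpa$ and $b\dsimpa\leq b\dsimpa\dsimpa$ absorbed by $(\star)$. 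Axiom 6 (existential) is dual. Axiom 7 (the adjunction $a\aleq b\aimp c \iff a\aand b\aleq c$): unfolding, $a\dsimpa \leq ((b\dsimpa\simp c\dsimpa))\dsimpa$ versus $((a\dsimpa\sand b\dsimpa))\dsimpa \leq c\dsimpa$; using $(\star)$ to strip the outer $\dsimpa$ from $b\dsimpa\simp c\dsimpa$ (since $b\dsimpa\simp c\dsimpa$ is already of the form $\cdot\simp c\dsimpa$ and $(u\simp v\dsimpa)\dsimpa \leq\geq u\simp v\dsimpa$ by \ref{eq:imp-6} and \ref{eq:imp-5}), the left side becomes $a\dsimpa\leq b\dsimpa\simp c\dsimpa$, equivalent by the $\tildeB$-adjunction to $a\dsimpa\sand b\dsimpa \leq c\dsimpa$, and the right side $((a\dsimpa\sand b\dsimpa))\dsimpa\leq c\dsimpa$ is equivalent to $a\dsimpa\sand b\dsimpa\leq c\dsimpa$ because $c\dsimpa$ is a $\dsimpa$-fixed point up to $\leq\geq$ (so $u\leq c\dsimpa \iff u\dsimpa\leq c\dsimpa\dsimpa\leq\geq c\dsimpa$, using $(\star)$ and $u\leq u\dsimpa$). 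Finally fullness is free. The main obstacle, as the above makes clear, is not any single axiom but pinning down the idempotence-up-to-equivalence fact $(\star)$ and the companion observation that everything of shape $u\simp(v\dsimpa)$ is $\dsimpa$-stable (essentially \ref{eq:imp-6}); once these two are established the verification is uniform. I would therefore open the proof by stating and proving $(\star)$ (and noting $b\dsimpa\simp c\dsimpa \leq\geq (b\dsimpa\simp c\dsimpa)\dsimpa$ as an immediate corollary of \myprop{\ref{prop:eq-imp}}), then run through axioms 1–7 in order, each in a line or two, and close with the remark that $\calA=\calE=\wp(\calB)$ gives fullness.
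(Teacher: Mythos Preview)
Your approach---checking the pHA axioms for $\aB$ one by one using the monotonicity \ref{eq:imp-7}, the unit \ref{eq:imp-5}, and the idempotence $x\dsimpa\dsimpa\leq x\dsimpa$---is exactly the paper's; note that your key fact $(\star)$ is nothing but \ref{eq:imp-6} (substitute $b\mapsto x\simp a$), so the laboured re-derivation is unnecessary, and the $\abot$ case is a one-line application of \ref{eq:imp-7} to $\sbot\leq b$ rather than the tangle you went through. For the adjunction axiom, the paper isolates the intermediate inequality $(x\simp(y\simp z))\simp z\simp a \leq x\simp(y\simp a)$ and instantiates it with $x=c\dsimpa$, $y=d\simp a$, $z=a$ to obtain exactly your ``$\dsimpa$-stability of $b\dsimpa\simp c\dsimpa$''; your packaging via \ref{eq:imp-6} works once you curry $u\simp((v\simp a)\simp a)$ into $(u\sand(v\simp a))\simp a$, a step you should make explicit.
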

\begin{proof}
We check one by one all the points of \mydef{\ref{def:pHA}} and
\mydef{\ref{def:full}}:
\begin{itemize}
\item $\aleq$ is a pre-order: inherited from $\leq$
\item $b \aleq \atop$ since $b \dsimpa \leq \stop \dsimpa$ (by
  \ref{eq:imp-7}). Similarly for $\abot$.
\item $b \aand c$ is a lower bound of $b$ and $c$. Let us
  show $b \aand c \aleq b$. By definition of $\aand$ and of $\sand$,
  $b \aand c \leq b \dsimpa$. By \ref{eq:imp-7} $(b \aand c) \dsimpa
  \leq b \dsimpa \dsimpa$ and by \ref{eq:imp-6} of
  \myprop{\ref{prop:eq-imp}} $b \dsimpa \dsimpa \leq b \dsimpa$ which
  allows us to conclude. Similar arguments show that $b \aand c \aleq
  c$.

\item $b \aand c$ is a greatest lower bound of $b$ and $c$: let $d$
  such that $d\aleq b$ and $d \aleq c$. By definition of $\sand$,
  $\aand$ and of $\aleq$, $d \dsimpa \leq b \aand c$ and by
  \ref{eq:imp-5} of \myprop{\ref{prop:eq-imp}}, $b \aand c \leq (b \aand
  c) \dsimpa$ which allows us to conclude.

\item $b \aor c$ is an upper bound of $b$ and $c$. Let us show $b
  \aleq b \aor c$. By definition of $\sor$ and of $\aor$, $b \dsimpa
  \leq b \aor c$. We conclude by \ref{eq:imp-5} of
  \myprop{\ref{prop:eq-imp}}. Similar arguments show that $c \aleq b \aor
  c$.

\item $b \aor c$ is a least upper bound of $b$ and $c$. Let $d$ such
  that $b \aleq d$ and $c \aleq d$. Then, $(b \dsimpa) \sor (c
  \dsimpa) \leq d \dsimpa$ and by \ref{eq:imp-7} of
  \myprop{\ref{prop:eq-imp}}, $((b \dsimpa) \sor (c \dsimpa)) \dsimpa
  \leq d \dsimpa \dsimpa$. By applying \ref{eq:imp-6}, $ d \dsimpa
  \dsimpa \leq d \dsimpa$, which allows us to conclude.

\item $\afa A$ is a lower bound of $A$. Let $x \in A$. Then $\afa A
  \leq x \dsimpa$ by definition of $\afa$ and $\sfa$. Using
  \myprop{\ref{prop:eq-imp}}, by \ref{eq:imp-7} $(\afa A) \dsimpa \leq
  x \dsimpa \dsimpa$ and by \ref{eq:imp-6} $x \dsimpa \dsimpa \leq x
  \dsimpa$, which allows us to conclude.

\item $\afa A$ is a greatest lower bound of $A$. Let $b$ such that
  for any $x \in A$, $b \aleq x$. Then $b \dsimpa \leq x \dsimpa$ and
  by definition of $\sfa$, $b \dsimpa \leq \sfa (A \dsimpa) = \afa A$. By
  \ref{eq:imp-5} of \myprop{\ref{prop:eq-imp}}, $\afa A \leq (\afa A)
  \dsimpa$, which allows us to conclude.

\item $\aex A$ is an upper bound of $A$. Let $x \in A$. Then $x
  \dsimpa \leq \aex A$ by definition of $\aex$ and $\sex$. By
  \ref{eq:imp-5} $\aex A \leq (\aex A) \dsimpa$, which allows us to
  conclude.

\item $\aex A$ is a least upper bound of $A$. Let $b$ such that for
  any $x \in A$, $x \aleq b$. Then $x \dsimpa \leq b \dsimpa$ and
  by definition of $\sex$, $\aex A = \sex (A \dsimpa) \leq b
  \dsimpa$. By \myprop{\ref{prop:eq-imp}} we derive $(\aex A) \dsimpa
  \leq b \dsimpa \dsimpa$ and $b \dsimpa \dsimpa \leq b \dsimpa$,
  which allows us to conclude.

\item direct way of the implication property. Assume $b \aleq c \aimp
  d$, that is to say $b \dsimpa \leq ((c 
  \dsimpa) \simp (d \dsimpa)) \dsimpa$. As an intermediate result we
  claim that for any $x$, $y$ and $z$, $(x \simp (y \simp 
  z)) \simp z \simp a \leq  x \simp (y \simp a)$.

$$
\begin{array}{r@{\leq}ll}
x \simp (y \simp z) & x \simp (y \simp z) & 
\mbox { (reflexivity)}\\
(x \simp (y \simp z)) \sand x \sand y & z & 
\mbox{ (Definition of $\simp\!\!$)}\\
x \sand y & x \simp (y \simp z) \simp z & 
\mbox{ (Definition of $\simp\!\!$)}\\
x \sand y & [x \simp (y \simp z) \simp z] \dsimpa\!\! & 
\mbox{ (\myprop{\ref{prop:eq-imp}})}\\
\left[ x \simp (y \simp z) \simp z \simp a \right] \sand x \sand y & a &
\mbox{ (Definition of $\simp\!\!$)} \\
x \simp (y \simp z) \simp z \simp a  & x \simp (y \simp a) & 
\mbox{ (Definition of $\simp\!\!$)}\\
\end{array}
$$

If we replace in this last inequality $x$ by $c \dsimpa$, $y$ by $d
\simp a$ and $z$ by $a$, we get $((c \dsimpa) \simp (d \dsimpa))
\dsimpa \leq ((c \dsimpa) \simp (d \dsimpa))$ so that we derive $b
\dsimpa \leq (c \dsimpa) \simp (d \dsimpa)$, or said otherwise $(b
\dsimpa) \sand (c \dsimpa) \leq d \dsimpa$. By
\myprop{\ref{prop:eq-imp}} we get the inequality $((b \dsimpa) \sand
(c \dsimpa)) \dsimpa \leq d \dsimpa \dsimpa \leq d \dsimpa$, which is
exactly $b \aand c \aleq d$.

\item conversely, assume $b \aand c \aleq d$, i.e. $((b \dsimpa) \sand (c
  \dsimpa)) \dsimpa \leq d \dsimpa$. By \ref{eq:imp-7} of
  \myprop{\ref{prop:eq-imp}} we get that
  $((b \dsimpa) \sand (c \dsimpa)) \leq ((b \dsimpa) \sand (c
  \dsimpa)) \dsimpa$, so $b \dsimpa \!\leq\! (c \dsimpa) \simp (d
  \dsimpa)$ by definition of $\simp$. And by \ref{eq:imp-5} we get that $(c
  \dsimpa) \simp (d \dsimpa)\! \leq ((c \dsimpa) \simp (d \dsimpa))
  \dsimpa$, which allows us to conclude. \qed{}
\end{itemize}
\end{proof}

\begin{proposition}\label{prop:full-ordered}
Let $\calB$ be a full and ordered pHA, with respect to $\eqK$. Let $a \in
\tildeB$. The $a$-translation $\asem{\calB}{a}$ of $\calB$ is a full
and ordered pHA with respect to $\eqK$.
\end{proposition}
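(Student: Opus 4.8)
The plan is to build on Proposition~\ref{BTVA}: it already tells us that $\aB = \asem{\calB}{a}$ is a full pHA, so all that remains is to check the three clauses of \mydef{\ref{def:order}} for the order $\eqK$, which we keep unchanged from $\calB$ --- namely that $\eqK$ refines $\aleq$, that $\atop$ is $\eqK$-maximal, and that $\aand,\aor,\afa,\aex$ are $\eqK$-monotonous while $\aimp$ is left $\eqK$-anti-monotonous and right $\eqK$-monotonous.

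First I would isolate the single new ingredient, as a short lemma: the map $\dsimpa\colon x \mapsto (x\simp a)\simp a$ is $\eqK$-monotonous, and hence so is its componentwise extension $A \mapsto A\dsimpa$ on subsets. The reason is that, $\calB$ being ordered, $\simp$ is left $\eqK$-anti-monotonous, so $x \mapsto x\simp a$ is $\eqK$-anti-monotonous; composing this map with itself turns anti-monotonicity into monotonicity, which is exactly the claim. Once this is in hand, every remaining verification is a one-liner.

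Indeed: refinement holds because $b \eqK c$ implies $b \leq c$ in $\calB$ (refinement there), whence $b\dsimpa \leq c\dsimpa$ by \ref{eq:imp-7} of \myprop{\ref{prop:eq-imp}}, i.e. $b \aleq c$; and $\atop = \stop$ stays $\eqK$-maximal since $\eqK$ is unchanged. For $\aand$, if $b \eqK b'$ and $c \eqK c'$ the lemma gives $b\dsimpa \eqK b'\dsimpa$ and $c\dsimpa \eqK c'\dsimpa$, so $(b\dsimpa)\sand(c\dsimpa) \eqK (b'\dsimpa)\sand(c'\dsimpa)$ by $\eqK$-monotonicity of $\sand$ in $\calB$, which is precisely $b \aand c \eqK b' \aand c'$; the cases of $\aor$, $\afa$, $\aex$ are identical, using monotonicity of $\sor,\sfa,\sex$ together with the componentwise form of the lemma for the two quantifiers. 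For $\aimp = (b\dsimpa)\simp(c\dsimpa)$, the lemma combined with left anti-monotonicity of $\simp$ yields left anti-monotonicity in the first argument, and the lemma combined with right monotonicity of $\simp$ yields right monotonicity in the second.

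The proof is therefore essentially a checklist once the lemma is available; the only place that needs care --- the ``hard part'', modest as it is --- is the left/right and mono/anti bookkeeping inside the composite $\dsimpa$, in particular noticing that the composition of two anti-monotonic maps is monotonic, so that $\dsimpa$ preserves $\eqK$ rather than reversing it.
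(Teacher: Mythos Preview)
Your proof is correct and follows essentially the same route as the paper: invoke \myprop{\ref{BTVA}} for fullness, keep $\eqK$ unchanged, and verify \mydef{\ref{def:order}} by reducing every clause to the $\eqK$-monotonicity of $x \mapsto x\dsimpa$ (obtained by applying left anti-monotonicity of $\simp$ twice) together with the corresponding property of the original connective. The only difference is presentational --- you isolate the $\dsimpa$-monotonicity as a separate lemma, whereas the paper derives $b\dsimpa \eqK c\dsimpa$ inline at the point of use --- but the argument is the same.
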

\begin{proof}
By \myprop{\ref{BTVA}}, $\asem{\calB}{a}$ is a full pHA. We check
\mydef{\ref{def:order}}:
\begin{itemize}
\item $\eqK$ is by definition an order relation on $\calB$, which is
  also the domain of $\asem{\tildeB}{a}$.
\item $\atop$ (resp. $\abot$) is maximal (resp. minimal) for
  the same reason.
\item assume $b \eqK c$. Then $b \leq c$ and by
  \myprop{\ref{prop:eq-imp}} $b \aleq c$.
\item $\aand$ is monotonous. Let $b, c, d$ be elements of the algebra,
  and assume $b \eqK c$. By left-antimonotonousity of $\eqK$ with respect
  to $\simp$, $b \dsimpa \eqK c \dsimpa$, so $b \aand
  d = (b \dsimpa) \sand (d \dsimpa) \eqK (c \dsimpa) \sand (d \dsimpa)
  =  c \aand d$ by monotonicity of $\eqK$ with respect to $\sand$.
\item the other properties with respect to $\aor$, $\aimp$, $\afa$ and
  $\aex$ are shown in the same way:  first notice that $b \dsimpa \eqK
  c \dsimpa$ and then use the corresponding property of $\eqK$ with
  respect to the original connective. Remember that, for $A, A'$ sets of
  elements of $\asem{\tildeB}{a}$, $A \eqK A'$ means that, for any $x
  \in A$, there exists $y \in A'$ such that $x \eqK y$. \qed{}
\end{itemize}
\end{proof}

\begin{proposition}\label{prop:complete}
If $\calB$ is a full, ordered and complete pHA, then its
$a$-translation $\asem{\calB}{a}$ is a full, ordered and complete pHA.
\end{proposition}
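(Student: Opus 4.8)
The plan is to bootstrap off \myprop{\ref{prop:full-ordered}}: we already know $\asem{\calB}{a}$ is a full, ordered pHA, so the only remaining task is to exhibit, for every subset $S \subseteq \calB$, a greatest lower bound for the order $\eqK$. The key observation is that the order $\eqK$ is literally \emph{the same} relation on the carrier $\calB$ in both $\calB$ and $\asem{\calB}{a}$ — only the $\leq$ pre-order and the algebraic operations get twisted by the $\dsimpa$ construction. Hence I would argue that the $\eqK$-greatest lower bound of $S$ computed in $\calB$ (which exists by completeness of $\calB$) is \emph{also} a $\eqK$-greatest lower bound of $S$ in $\asem{\calB}{a}$, since "being a greatest lower bound for $\eqK$" is a statement phrased purely in terms of $\eqK$ and does not mention $\leq$, $\sand$, $\simp$, etc.

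Concretely, first I would recall from \mydef{\ref{def:complete}} that completeness of $\calB$ gives us $glb(S)$ for every $S \subseteq \calB$, and that this automatically yields least upper bounds as well (the standard fact that a complete meet-semilattice is a complete lattice). Then I would check that the operations $glb$ and $lub$ in $\asem{\calB}{a}$ can simply be taken to coincide with those of $\calB$: if $m = glb_{\calB}(S)$ then $m \eqK x$ for all $x \in S$, and whenever $d \eqK x$ for all $x \in S$ we have $d \eqK m$ — and all four of these $\eqK$-facts transfer verbatim to $\asem{\calB}{a}$ because the carrier and $\eqK$ are unchanged. So $m = glb_{\asem{\calB}{a}}(S)$, and $\asem{\calB}{a}$ is complete. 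The note in \mydef{\ref{def:complete}} that greatest lower bounds entail least upper bounds then finishes the argument.

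I do not expect a genuine obstacle here; the proposition is essentially a bookkeeping remark. The one subtlety worth stating explicitly is the compatibility of the chosen $glb$/$lub$ with the pHA structure of $\asem{\calB}{a}$ — but "complete" in the sense of \mydef{\ref{def:complete}} only requires that $\eqK$ has all greatest lower bounds, not that these interact in any particular way with $\aand$, $\aimp$, etc. (indeed the remark after \mydef{\ref{def:complete}} warns that $\eqK$ need not define a Heyting-algebra order at all). So the proof is just: apply \myprop{\ref{prop:full-ordered}} for fullness and orderedness; observe $\eqK$ is unchanged; transport greatest-lower-bound existence from $\calB$ to $\asem{\calB}{a}$ one subset at a time. \qed
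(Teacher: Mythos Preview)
Your proposal is correct and matches the paper's own argument: invoke \myprop{\ref{prop:full-ordered}} for fullness and orderedness, then observe that since the carrier $\calB$ and the order $\eqK$ are unchanged, the greatest lower bounds (and hence least upper bounds) provided by completeness of $\tildeB$ serve verbatim in $\asem{\calB}{a}$. The paper states this more tersely, but the content is identical.
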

\begin{proof}
From \myprop{\ref{prop:full-ordered}}, $\asem{\tildeB}{a}$ is full and 
ordered. The greatest lower and lowest upper bounds of any
$A$ subset of $\calB$ (the domain of $\asem{\tildeB}{a}$) for $\eqK$
are members of $\calB$ because $\tildeB$ is complete. The condition of
\mydef{\ref{def:complete}} is fulfilled. \qed{} 
\end{proof}

\subsection{Relating Interpretations}

\begin{proposition}\label{prop:propag-1}
Let $\calB$ be a full, ordered and complete pHA. Consider a
$\tildeB$-valued structure $\calM$ and note $\interpinb{.}$ the
denotation $\calM$ generates in $\tildeB$. Let $A$ be a closed
proposition and let $\asem{B}{\interpinb{A}}$ be the
$\interpinb{A}$-translation of $\calB$:
\begin{enumerate}
\item $\calM$ is also a $\asem{\tildeB}{\interpinb{A}}$-valued
  structure. Let $\interpinAb{.}$ be the denotation it generates in
  $\asem{\tildeB}{\interpinb{A}}$.
\item for any term $t$, any assignment $\phi$, $\interpinb{t}_\phi =
  \interpinAb{t}_\phi$.
\item For any proposition $B$, any assignment $\phi$,
  $\interpinb{\aprop{B}{A}}_\phi = \interpinAb{B}_\phi$.
\end{enumerate}
\end{proposition}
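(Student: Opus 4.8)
The plan is to prove the three statements in order, since each feeds the next, and to reserve the real work for item (3), which is an induction on the structure of the proposition $B$.

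\medskip

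For item (1), I would simply observe that $\asem{\tildeB}{\interpinb{A}}$ has the same underlying domain $\calB$ and the same set of function/predicate interpretation targets as $\tildeB$ (the carrier is $\calB$ for the terms and $\calB$ again for the predicates); hence the very same family $\langle \hat f_i, \hat P_j\rangle$ that makes $\calM$ a $\tildeB$-valued structure makes it a $\asem{\tildeB}{\interpinb{A}}$-valued structure. This is immediate from \mydef{\ref{def:structure}} and \mydef{\ref{def:a-trans-sem}}. For item (2), since the term interpretation clauses ($\interp{x}=\phi(x)$ and $\interp{f(t_1,\dots,t_n)}=\hat f(\dots)$) do not mention any pHA operation at all, a trivial induction on $t$ gives $\interpinb{t}_\phi = \interpinAb{t}_\phi$; in particular the atomic-proposition denotations also coincide, $\interpinb{P(\vec t)}_\phi = \interpinAb{P(\vec t)}_\phi$, which is the base case we need for (3).

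\medskip

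The core is item (3), proved by induction on the structure of $B$. The atomic, $\top$ and $\bot$ cases are immediate: for atomic $B$ we have $\aprop{B}{A}=B$ and we invoke (2); for $\top$ and $\bot$, both $\aprop{\top}{A}=\top$, $\aprop{\bot}{A}=\bot$ and $\atop=\stop$, $\abot=\sbot$. For the connective and quantifier cases, the pattern is uniform. Take $\limp$: on one side $\interpinb{\aprop{(B_1\limp B_2)}{A}}_\phi = \interpinb{(\aprop{B_1}{A}\dimpa)\limp(\aprop{B_2}{A}\dimpa)}_\phi$, which by the denotation clauses equals $(\interpinb{\aprop{B_1}{A}}_\phi \dsimpa)\simp(\interpinb{\aprop{B_2}{A}}_\phi \dsimpa)$ where I write $\dsimpa$ for $\simp\interpinb{A}\simp\interpinb{A}$; here I use that $A$ is closed so $\interpinb{A}_\phi=\interpinb{A}$ is independent of $\phi$, and that $A$ being $B$-unbound means its free variables are not captured when we descend under the binders of $B$. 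On the other side, $\interpinAb{B_1\limp B_2}_\phi = \interpinAb{B_1}_\phi \aimp \interpinAb{B_2}_\phi = (\interpinAb{B_1}_\phi\dsimpa)\simp(\interpinAb{B_2}_\phi\dsimpa)$ by the definition of $\aimp$ in \mydef{\ref{def:a-trans-sem}}. Applying the induction hypothesis to $B_1$ and $B_2$ makes the two expressions syntactically equal. The $\sand$ and $\sor$ cases are identical word for word, using the definitions of $\aand$ and $\aor$. For the quantifier cases, e.g. $\lfa$: $\interpinb{\aprop{(\lfa x\,B_1)}{A}}_\phi = \interpinb{\lfa x\,(\aprop{B_1}{A}\dimpa)}_\phi = \sfa\{\interpinb{\aprop{B_1}{A}\dimpa}_{\phi,x:=e} \mid e\in\calM\} = \sfa\{(\interpinb{\aprop{B_1}{A}}_{\phi,x:=e}\dsimpa)\mid e\in\calM\}$, again using closedness of $A$ to pull $\interpinb{A}$ out of the binder; while $\interpinAb{\lfa x\,B_1}_\phi = \afa\{\interpinAb{B_1}_{\phi,x:=e}\mid e\in\calM\} = \sfa(\{\interpinAb{B_1}_{\phi,x:=e}\mid e\in\calM\}\dsimpa)$ by definition of $\afa$, and the two sets are equal by the induction hypothesis applied pointwise. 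The $\lex$ case is symmetric with $\sex$ and $\aex$.

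\medskip

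I expect the only genuinely delicate point to be bookkeeping around the $B$-unbound hypothesis and the closedness of $A$: one must be careful that when the induction descends under a quantifier $\lfa x$ or $\lex x$, the assignment is extended at $x$, and $\interpinb{A}$ must remain unchanged — which holds precisely because $A$ is closed (so in fact the $B$-unbound condition is automatically satisfied, but the statement is phrased generally). Everything else is a mechanical matching of the syntactic $A$-translation clauses of \mydef{\ref{def:atrans-prop}} against the semantic $a$-translation clauses of \mydef{\ref{def:a-trans-sem}}, clause by clause; no use of the pHA axioms or of \myprop{\ref{prop:eq-imp}} is needed here, since we are proving an \emph{equality} of denotations, not an inequality.
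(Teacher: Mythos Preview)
Your proposal is correct and follows essentially the same approach as the paper: items (1) and (2) are dispatched as immediate, and item (3) is proved by structural induction on $B$, unfolding the syntactic $A$-translation on one side and the semantic $a$-translated operators on the other, case by case. Your treatment is in fact slightly more explicit than the paper's about why the closedness of $A$ matters when descending under quantifiers, but the argument is the same.
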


$A$ is chosen to be closed, otherwise we would need to consider
$\interpinb{A}_{\phi_0}$ for a fixed $\phi_0$ and consider only
formul\ae{} $B$ such that $A$ is $B$-unbound. We rather avoid those
complications.

\begin{proof}
$\calM$ is obviously a $\asem{\tildeB}{\interpinb{A}}$-valued
  structure (see \mydef{\ref{def:structure}}) since the domain of both
  pHAs is the same and $\calM$ assigns values only to atomic
  constructs. The second claim is also obvious, since the domain for
  terms does not change. We prove the last claim by an easy induction
  on the structure of $B$, where we omit the valuation $\phi$, which
  plays no role. We note $a = \interpinb{A}$ in the definition of the
  operators of $\asem{\tildeB}{\interpinb{A}}$.
\begin{itemize}

\item if $B$ is an atomic formula $P(t_1, \cdots, t_n)$, then 
by construction and definition of the $A$-translation:
$$
\interpinb{\aprop{B}{A}} = \interpinb{B} = 
\hat{P}(\interpinb{t_1}, \cdots, \interpinb{t_n}) =
\hat{P}(\interpinAb{t_1}, \cdots, \interpinAb{t_n}) = 
\interpinAb{B}
$$ 

\item $\interpinb{\aprop{\top}{A}} = \stop = \interpinAb{\top}$,
  similarly for $\bot$.

\item $\interpinb{\aprop{(B \limp C)}{A}} = (\interpinb{\aprop{B}{A}}
  \simp 
  \interpinb{A} \simp \interpinb{A}) \simp (\interpinb{\aprop{C}{A}}
    \simp \interpinb{A} \simp \interpinb{A}) =
    \interpinb{\aprop{B}{A}} \aimp \interpinb{\aprop{C}{A}}$ which, by
    induction hypothesis is equal to $\interpinAb{B} \aimp
    \interpinAb{C} = \interpinAb{B \limp C}$.
\item similarily for $\land$ and $\lor$.
\item $\interpinb{\aprop{\lfa{x} B}{A}} = \sfa \{ \interpinb{
  \aprop{B}{A}}_{\langle x,d \rangle} \simp \interpinb{A} \simp
  \interpinb{A} ~|~ d \in \calM \}$ and by induction hypothesis and
  the notation of \mydef{\ref{def:a-trans-sem}}, this is equal to
  $\sfa  \{ \interpinAb{B}_{\langle x,d \rangle} ~|~ d \in \calM \}
  \simp \interpinb{A} \simp \interpinb{A} = \afa \{
  \interpinAb{B}_{\langle x,d \rangle} ~|~ d \in \calM \} =
  \interpinAb{\lfa{x} B}$.
\item similarly for $\lex$. \qed{}

\end{itemize}

\end{proof}

\subsection{Stability of Super-consistency} \label{sec:sc-stab} 

In this section we show that the super-consistency property of a
rewrite system is preserved by $A$-translation under certain conditions.\\

First, notice that the general statement is not true because nasty
interferences can happen if the $A$-translation is done with respect
to a $A$ containing propositions of the rewrite system. In
particular, we can lose the normalization property, which is implied
by super-consistency, and so, super-consistency itself. To illustrate
this, consider the following rewrite system consisting of the sole
rule $P \rew \top \land \top$. Super-consistency comes out easily:
given a pHA $\tildeB$, we let $\hat{P} = \stop \sand \stop$. But
super-consistency fails for its $P$-translated rewrite system:
$$
P \rew (\top \limp P \limp P) \land (\top \limp P \limp P)
$$

\begin{figure}
\scriptsize
\begin{tabular}{|c@{~}c|}
\hline & \\

\AXCm{\G \vdash \pi_1:~A}
\AXCm{\G \vdash \pi_2:~B}
\RightLabel{$\land_i, A \land B \equiv C$}
\LeftLabel{~}
\BICm{\G \vdash \langle \pi_1, \pi_2 \rangle:~C}
\DP

&

\AXCm{\G, x: A \vdash \pi:~B}
\RightLabel{$\limp_i, C \equiv A \limp B$}
\UICm{\G \vdash \lambda x. \pi:~C}
\DP 

\\ & \\

\AXCm{\G \vdash \pi:~C}
\RightLabel{$\land_{e1}, C \equiv A \land B$}
\UICm{\G \vdash fst(\pi):A}
\DP

&

\AXCm{\G \vdash \pi_1:~C}
\AXCm{\G \vdash \pi_2:~A}
\RightLabel{$\limp_e, C \equiv A \limp B~$}
\BICm{\G \vdash \pi_1~\pi_2:~B}
\DP

\\

 & \\

$fst \langle \pi_1, \pi_2 \rangle \rhd \pi_1$

&

$(\lambda x. \pi_1)~\pi_2 \rhd \{\pi_2/x\}\pi_1$
\\

& \\
\hline
\end{tabular}
\caption{Some typing and reduction rules of natural deduction modulo \cite{PNM}}
\label{fig:ded-nat}
\end{figure}

As we will see, in natural deduction we can define a proof-term that
is not normalizing. Adopting the syntax and typing rules of
\cite{PNM}, shown in \myfig{\ref{fig:ded-nat}}, we let $t_1$ and $t_2$
be the following $\lambda$-terms, $I$ being the constant corresponding
to the $\top$-intro rule:\footnote{At the price of readability, $I$ and
$\top$ can be everywhere safely replaced by $\lambda y. y$ and $B
  \limp B$, respectively.}
\begin{eqnarray*}
t_1 & = & \lambda x.[fst(x~I)~(\lambda z.(x~I))]\\
t_2 & = & \lambda z. \langle t_1, t_1 \rangle
\end{eqnarray*}
Those terms can be typed respectively by $\top \limp P \limp P$ and by
$\top \limp (\top \limp P \limp P \land \top \limp P \limp P)$ or,
using the congruence, by $\top \limp P$: both bound $z$ can be
assigned the type $\top$, while $x$ has the type $\top \limp P \equiv
\top \limp (\langle (\top \limp P) \limp P), (\top \limp P) \limp
P)\rangle$, this last type identification being the source of the
problems. With those terms, we form the following looping reduction
sequence:
\begin{eqnarray*}
t_1~t_2 & \rhd & fst(t_2~I)~(\lambda z.(t_2~I)) \\
        & \rhd & fst(\langle t_1, t_1 \rangle)~(\lambda z. \langle
t_1, t_1 \rangle)\\
        & \rhd & t_1~t_2
\end{eqnarray*}

\noindent Since we do not have normalization, we cannot have
super-consistency. This is why restricting $A$ is the key
to \mythm{\ref{thm:atrans-sc}}.

\begin{definition}[$R$-compatibility]
Let $R$ be a rewriting system. A proposition $A$ is said to be
$R$-compatible if and only if does not contain any predicate or
function symbol appearing in $R$.
\end{definition}

\begin{proposition}\label{prop:propag-2}
Let $R$ be a rewrite system, and $A$ be a closed proposition. Let
$\tildeB$ be a pHA and consider a $\tildeB$-valued structure
$\calM$, generating an interpretation $\interpinb{\_}$. Let
$\asem{\tildeB}{\interpinb{A}}$ be the $\interpinb{A}$-translation of
  $\tildeB$ and $\aprop{R}{A}$ be the $A$-translation of $R$.

If the interpretation $\interpinAb{\_}$ generated by $\calM$ in
$\asem{\tildeB}{\interpinb{A}}$ is a model of $R$ then $\aprop{R}{A}$
has a $\calB$-model.
\end{proposition}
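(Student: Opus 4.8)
The plan is to show that the very same structure $\calM$, now read in the original algebra $\tildeB$ instead of in $\asem{\tildeB}{\interpinb{A}}$, is a model of $\aprop{R}{A}$; nothing about the structure itself changes, only the ambient pseudo-Heyting algebra. By the first point of \myprop{\ref{prop:propag-1}}, $\calM$ is simultaneously a $\tildeB$-valued and a $\asem{\tildeB}{\interpinb{A}}$-valued structure, generating the denotations $\interpinb{\_}$ and $\interpinAb{\_}$ respectively, which moreover agree on all terms. So the goal reduces to checking that $\interpinb{\_}$ is a model of $\aprop{R}{A} = \{P_i \rew \aprop{A_i}{A}\}$ in $\tildeB$, i.e. that it respects the congruence generated by these rules.

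Since a denotation is defined by structural recursion on propositions, $\interpinb{\_}$ is automatically a congruence with respect to the connectives and quantifiers and, via the standard substitution lemma for denotations, with respect to substitution; hence $\interpinb{\_}$ is a model of $\aprop{R}{A}$ as soon as $\interpinb{P_i}_\phi = \interpinb{\aprop{A_i}{A}}_\phi$ for every index $i$ and every assignment $\phi$. Here $\aprop{R}{A}$ and each $\aprop{A_i}{A}$ are well formed with no side condition to check, precisely because $A$ is closed, hence $A_i$-unbound for every $i$; this is the only use of the closedness hypothesis.

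I would then establish $\interpinb{P_i}_\phi = \interpinb{\aprop{A_i}{A}}_\phi$ by two applications of the third point of \myprop{\ref{prop:propag-1}} together with the hypothesis. With $B := P_i$, and using that $P_i$ is atomic so $\aprop{P_i}{A} = P_i$, one gets $\interpinb{P_i}_\phi = \interpinb{\aprop{P_i}{A}}_\phi = \interpinAb{P_i}_\phi$. With $B := A_i$, one gets $\interpinb{\aprop{A_i}{A}}_\phi = \interpinAb{A_i}_\phi$. Finally, $P_i \rew A_i$ being a rule of $R$ we have $P_i \equiv A_i$, so the assumption that $\interpinAb{\_}$ is a model of $R$ gives $\interpinAb{P_i}_\phi = \interpinAb{A_i}_\phi$. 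Chaining the three equalities yields the claim, and hence the proposition.

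I do not expect a genuine obstacle: the argument is just transport of a model along the semantic $a$-translation, once \myprop{\ref{prop:propag-1}} is available. The only mildly delicate points are the reduction from ``respects the congruence generated by $\aprop{R}{A}$'' to ``validates every rule at every assignment'' --- the usual compositionality-plus-substitution bookkeeping --- and the verification that the syntactic $A$-translation is well defined here, both of which are immediate because $A$ is closed.
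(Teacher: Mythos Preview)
Your proof is correct and follows essentially the same route as the paper: take an arbitrary rule $P_i \rew \aprop{A_i}{A}$ of $\aprop{R}{A}$, use the atomic case (either directly from the definition of denotation or, as you do, via the third clause of \myprop{\ref{prop:propag-1}} with $\aprop{P_i}{A}=P_i$) to get $\interpinb{P_i}=\interpinAb{P_i}$, use \myprop{\ref{prop:propag-1}} on $A_i$ to get $\interpinb{\aprop{A_i}{A}}=\interpinAb{A_i}$, and close with the model hypothesis $\interpinAb{P_i}=\interpinAb{A_i}$. Your extra remarks on closedness of $A$ and on reducing ``model of the congruence'' to ``validates each rule'' are sound bookkeeping that the paper leaves implicit.
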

\begin{proof}
Let $P \rew \aprop{F}{A} \in \aprop{R}{A}$. By hypothesis, $P \rew F
\in R$ and $\interpinAb{P} = \interpinAb{F}$.
We conclude by noticing that, by definition, $\interpinAb{P} =
\interpinb{P}$ and that, by \myprop{\ref{prop:propag-1}}, $\interpinAb{F} =
\interpinb{\aprop{F}{A}}$. \qed{}
\end{proof}

The main requirement of \myprop{\ref{prop:propag-2}} is that
$\interpinAb{\_}$ must be a model of $R$. The choice of $\interpinb{\_}$
is here a degree of freedom, but this is not sufficient, even
assuming super-consistency. Indeed, the example of the beginning of the
section shows that this is impossible if $A$ is not $R$-compatible. We
must go through the following definition lemma.

\begin{lemma}[Relative grafting of structures] \label{lem:struct-comp}
Let $\tildeB$ be a pHA and $\calM_0$ and $\calM_1$ be two
$\tildeB$-valued structures. Let $A$ be a proposition. We define
$\calM_2$, the $A$-grafting of $\calM_0$ onto $\calM_1$ as
the following $\tildeB$-structure:
\begin{itemize}
\item for any function symbol $f$, $\hat{f} = \hat{f}_0$ (the value
  assigned by $\calM_0$) if $f$ syntactically appears in $A$ and
  $\hat{f} = \hat{f}_1$ (the value assigned by $\calM_1$) otherwise.
\item for any predicate symbol $P$, $\hat{P} = \hat{P}_0$ (the value
  assigned by $\calM_0$) if $P$ syntactically appears in $A$ and
  $\hat{P} = \hat{P}_1$ (the value assigned by $\calM_1$) otherwise.
\end{itemize}
Let $\interpinb{\_}_i$ be the interpretation generated by $\calM_i$
for $i = 0,1,2$. Then, for any proposition $B$:
\begin{itemize}
\item if $B$ contains only predicate and function symbols appearing in
  $A$,(remind that $\top$ and $\bot$ are connectives),
  $\interpinb{B}_2 = \interpinb{B}_0$  
\item if $B$ contains no predicate or function symbol appearing in
  $A$, $\interpinb{B}_2 = \interpinb{B}_1$
\end{itemize}
\end{lemma}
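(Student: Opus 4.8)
The plan is to prove the two bullet points simultaneously by a routine structural induction on $B$, preceded by the analogous statement for terms. Two preliminary remarks are in order. First, for $\calM_2$ to be a well-defined $\tildeB$-valued structure over $\calL$ we tacitly assume that $\calM_0$ and $\calM_1$ share the same domain $\calM$; the operations $\stop,\sbot,\simp,\sand,\sor,\sfa,\sex$ of $\tildeB$ are of course common to all three structures. Second, the induction must be carried out for \emph{every} assignment $\phi$ — this is needed for the quantifier cases — so I would state the property as: for every $\phi$, if every predicate and function symbol occurring in $B$ occurs in $A$ then $\interpinb{B}_2 = \interpinb{B}_0$, and if none of them occurs in $A$ then $\interpinb{B}_2 = \interpinb{B}_1$.

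First I would establish the term version: for every term $t$ and every $\phi$, if every function symbol of $t$ occurs in $A$ then $\interpinb{t}_2 = \interpinb{t}_0$, and if no function symbol of $t$ occurs in $A$ then $\interpinb{t}_2 = \interpinb{t}_1$. This is an induction on $t$: a variable $x$ is interpreted as $\phi(x)$ in all three structures and satisfies both hypotheses vacuously; for $t = f(t_1,\dots,t_n)$, in the first case $f$ occurs in $A$, hence in $\calM_2$ it is interpreted by $\hat{f}_0$, and each $t_j$ still has all its function symbols in $A$, so the induction hypothesis gives $\interpinb{t_j}_2 = \interpinb{t_j}_0$ and the denotation clause for applied function symbols lets us conclude; the second case is symmetric, using that $f$ does not occur in $A$ so it is interpreted by $\hat{f}_1$.

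Then comes the main induction on $B$. The atomic case $B = P(t_1,\dots,t_n)$ combines the term lemma with the fact that in the first case $P$ occurs in $A$, hence is interpreted in $\calM_2$ by $\hat{P}_0$ (and symmetrically by $\hat{P}_1$ in the second case); the constants $\top$ and $\bot$ denote $\stop$ and $\sbot$ in every structure, so both statements hold outright; for $\limp,\land,\lor,\lfa$ and $\lex$ the set of symbols occurring in $B$ is the union of those occurring in the immediate subformula(s), so whichever hypothesis holds for $B$ holds for each subformula, and I would conclude by applying the induction hypothesis and the corresponding clause of the denotation, which is structure-independent since we remain in $\tildeB$ — in the quantifier cases this means applying the induction hypothesis to all the assignments obtained from $\phi$ by reassigning $x$, which is exactly why $\phi$ was generalized.

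I do not expect a genuine obstacle here. The only points requiring a little care are that the two hypotheses of the lemma are not exhaustive — a proposition may mention symbols both inside and outside $A$, in which case neither conclusion is asserted — so the two cases have to be propagated independently through the induction rather than bundled into a single ``the interpretations coincide'' statement, and the already mentioned need to generalize over $\phi$ before starting, so that the quantifier step closes.
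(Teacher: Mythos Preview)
Your proposal is correct and follows exactly the approach the paper takes: the paper's proof reads in full ``Easy induction on the structure of $B$. The base case is guaranteed by the definition and it propagates readily.'' Your version simply unfolds this induction, with the expected auxiliary induction on terms and the generalization over assignments; your remark that $\calM_0$ and $\calM_1$ must share a common domain for $\calM_2$ to be well-defined is a point the paper leaves implicit.
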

\begin{proof}
Easy induction on the structure of $B$. The base case is guaranteed
by the definition and it propagates readily. \qed{}
\end{proof}

\begin{theorem} \label{thm:atrans-sc}
Let $R$ be a super-consistent rewrite system and let $A$ be a closed 
$R$-compatible proposition. $\aprop{R}{A}$ is super-consistent.
\end{theorem}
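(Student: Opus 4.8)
The goal is to produce, for an arbitrary full, ordered and complete pseudo-Heyting algebra $\tildeB$, a $\tildeB$-valued model of $\aprop{R}{A}$. The plan is to get it from \myprop{\ref{prop:propag-2}}: it is enough to exhibit a single $\tildeB$-valued structure $\calM$ such that the interpretation $\interpinAb{\_}$ it induces in $\asem{\tildeB}{\interpinb{A}}$ is a model of $R$, where $\interpinb{\_}$ is the interpretation $\calM$ induces in $\tildeB$. So the whole task reduces to building this $\calM$; the only degree of freedom is $\calM$ itself, and with it the number $\interpinb{A}\in\calB$.

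The idea is to interpret the symbols of $A$ and those of $R$ independently, which is legitimate precisely because $A$ is $R$-compatible. First I would fix, once and for all, the value $a\in\calB$ that $A$ should take: interpret every predicate symbol occurring in $A$ by the function constantly equal to $\stop$. Then every atom of $A$ denotes $\stop$, and an easy induction --- in which each quantified set $\{\,\ldots\mid e\in\calM\,\}$ collapses to a singleton --- shows that the denotation of the closed proposition $A$ in $\tildeB$ under such a structure depends neither on its domain nor on the interpretation of its function symbols; let $a$ be this common value. By \myprop{\ref{prop:complete}}, $\asem{\tildeB}{a}$ is again full, ordered and complete, so by super-consistency of $R$ (\mydef{\ref{def:sc}}) it carries an $\asem{\tildeB}{a}$-valued structure $\calM_0$ that is a model of $R$. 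Let $\calM_A$ be a structure over the same domain as $\calM_0$ interpreting every predicate symbol by the constant $\stop$, and let $\calM$ be the $A$-grafting of $\calM_A$ onto $\calM_0$ in the sense of Lemma~\ref{lem:struct-comp}: $\calM$ agrees with $\calM_A$ on the symbols occurring in $A$ and with $\calM_0$ on all the others, in particular --- by $R$-compatibility --- on every symbol of $R$.

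It then remains to check the two facts required by \myprop{\ref{prop:propag-2}}. First, by Lemma~\ref{lem:struct-comp} the denotation of $A$ in $\tildeB$ under $\calM$ equals its denotation under $\calM_A$, which is $a$ by the domain-independence above; hence $\interpinb{A}=a$ and $\asem{\tildeB}{\interpinb{A}}=\asem{\tildeB}{a}$, the algebra $\calM_0$ lives in. Second, the interpretation $\interpinAb{\_}$ that $\calM$ induces in $\asem{\tildeB}{a}$ is a model of $R$: for every rule $P\rew F$ of $R$, both $P$ and $F$ contain only symbols of $R$, hence no symbol of $A$, so Lemma~\ref{lem:struct-comp} makes their $\asem{\tildeB}{a}$-denotations under $\calM$ coincide with those under $\calM_0$, which agree because $\calM_0$ is a model of $R$ (checking the defining rules suffices, a denotation respecting contexts and the equivalence closure for free). \myprop{\ref{prop:propag-2}} then yields a $\tildeB$-valued model of $\aprop{R}{A}$; since $\tildeB$ was arbitrary, $\aprop{R}{A}$ is super-consistent.

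I expect the only delicate point to be the apparent circularity: $\asem{\tildeB}{\interpinb{A}}$ depends on $\interpinb{A}$, which depends on $\calM$, which I want to assemble from a model living inside $\asem{\tildeB}{\interpinb{A}}$. This is what forces me to pin down $a$ beforehand, independently of any structure, and the enabling observation is the small lemma that a closed proposition all of whose atoms denote $\stop$ has a domain-independent denotation. Once that is in hand the rest is bookkeeping: $R$-compatibility guarantees that grafting the interpretation of $A$'s symbols onto $\calM_0$ disturbs neither the value of $A$ nor the denotations of $R$'s propositions, while \myprop{\ref{prop:propag-1}} and \myprop{\ref{prop:propag-2}} already encapsulate the comparison between a denotation and its $A$-translate. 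So I would expect most of the write-up's length to go into making the grafting and the domain-independence claim precise, the conclusion being a short chain of equalities.
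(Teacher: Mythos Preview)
Your argument follows the paper's strategy: use super-consistency to obtain a model of $R$ in the translated algebra $\asem{\tildeB}{a}$, graft onto it an interpretation of the symbols occurring in $A$ via \mylem{\ref{lem:struct-comp}}, and conclude by \myprop{\ref{prop:propag-2}}. The one substantive difference is how the parameter $a$ is fixed. The paper simply picks an arbitrary $\tildeB$-valued structure $\calM_0$, sets $a=\interpinb{A}_0$, obtains a model $\calM_1$ of $R$ in $\asem{\tildeB}{a}$ by super-consistency, and then grafts $\calM_0$ onto $\calM_1$, without commenting on whether these two structures share a domain (which the grafting of \mylem{\ref{lem:struct-comp}} tacitly requires). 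You instead pin $a$ down beforehand by interpreting every predicate of $A$ as the constant $\stop$ and noting that the resulting denotation of the closed proposition $A$ is independent of the domain and of the function-symbol interpretations; this cleanly breaks the circularity you flag and lets you build $\calM_A$ afterwards on whatever domain super-consistency hands you. Apart from this extra care, the two proofs coincide.
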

\begin{proof}
Let $\tildeB$ be a pHA. Let $\calM_0$ be any $\tildeB$-valued
structure, and $\interpinb{\_}_0$ the interpretation it generates. Let
$a = \interpinb{A}_0$.

$R$ has a $\asem{\tildeB}{a}$-model because it is
super-consistent. Let $\interpinab{\_}_1$ be the interpretation and
$\calM_1$ the associated $\asem{\tildeB}{a}$-valued
structure. $\calM_1$ is as well a $\tildeB$-valued 
structure, so let $\calM_2$ be the $A$-grafting of $\calM_0$ onto
$\calM_1$, as in \mylem{\ref{lem:struct-comp}}. Let $\interpinb{\_}_2$ and
$\interpinab{\_}_2$ be the interpretations generated in $\tildeB$ and
$\asem{\tildeB}{a}$, respectively. From \mylem{\ref{lem:struct-comp}}
we derive:
\begin{itemize}
\item $\interpinb{A}_2 = \interpinb{A}_0$
\item for any rewrite rule in $R$, $P \rew F$, $\interpinab{P}_2 =
  \interpinab{P}_1$ and $\interpinab{F}_2 = \interpinab{F}_1$
\end{itemize}

In particular, $\interpinab{\_}_2$ inherits from  $\interpinab{\_}_1$
the property to be a model of the rewrite system $R$. We have
fulfilled the requirements of \myprop{\ref{prop:propag-2}}: the pHA is
$\tildeB$, the structure is $\calM_2$, $\interpinab{\_}_2$ is a model
of $R$ in $\asem{\tildeB}{a} = \asem{\tildeB}{\interpinb{A}_2}$, since
$\interpinb{A}_2 = \interpinb{A}_0 = a$.

Therefore $\aprop{R}{A}$ has a $\tildeB$-model for any
$\tildeB$-model, and it is super-consistent.\qed{}
\end{proof}

\section{Super-consistency and Classical Sequent Calculus}

\subsection{From Intuitionistic to Classical Deduction Modulo}

We adapt results of \cite{PNM} to the settings of $A$-translation that
shift cut-elimination in the intuitionistic calculus to the classical
calculus. In the sequel we let $R$ be a rewrite system and $A$ be a
closed $R$-compatible proposition.

\begin{proposition} \label{prop:atrans-rew}
Let $B, C$ be propositions. If $B \rew_R C$ then $\aprop{B}{A}
\rew_{\aprop{R}{A}} \aprop{C}{A}$. If $B \equiv_R C$ then $\aprop{B}{A}
\equiv_{\aprop{R}{A}} \aprop{C}{A}$.
\end{proposition}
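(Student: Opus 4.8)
The plan is to prove both statements by induction on the structure of the proposition $B$, since the $A$-translation is defined by recursion on that structure and the rewrite/congruence relations are generated by closure of one-step rewriting under context and reflexivity/symmetry/transitivity.

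\medskip

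First I would establish the single-step claim: if $B \rew_R C$ then $\aprop{B}{A} \rew_{\aprop{R}{A}} \aprop{C}{A}$. I proceed by induction on the position of the redex contracted in $B$. In the base case the redex is at the root, so $B$ is an atomic proposition $P_i$ (an instance $\sigma P_i$ of the left-hand side of some rule $P_i \rew F_i \in R$) and $C = \sigma F_i$. By \mydef{\ref{def:atrans-prop}}, $\aprop{B}{A} = \aprop{\sigma P_i}{A} = \sigma P_i$ because $\sigma P_i$ is atomic, and the corresponding rule in $\aprop{R}{A}$ is $P_i \rew \aprop{F_i}{A}$, so $\aprop{B}{A} = \sigma P_i \rew_{\aprop{R}{A}} \sigma(\aprop{F_i}{A})$. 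It then remains to check that $\sigma(\aprop{F_i}{A}) = \aprop{\sigma F_i}{A}$, i.e. that the $A$-translation commutes with the substitution $\sigma$; this holds because $A$ is closed (so substitution does not touch it) and the translation is defined compositionally — a small auxiliary sublemma, proved by induction on $F_i$. For the inductive step, the redex lies strictly inside $B$, so $B$ is compound, say $B = B_1 \limp B_2$ with the redex inside $B_1$ (the cases $\land$, $\lor$, $\lfa$, $\lex$ are identical, and the atomic/$\top$/$\bot$ cases cannot arise here). Then $C = C_1 \limp B_2$ with $B_1 \rew_R C_1$, and by the induction hypothesis $\aprop{B_1}{A} \rew_{\aprop{R}{A}} \aprop{C_1}{A}$. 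Since $\aprop{B}{A} = (\aprop{B_1}{A} \dimpa) \limp (\aprop{B_2}{A} \dimpa)$ and rewriting is closed under context, contracting the redex inside the subterm $\aprop{B_1}{A}$ (which appears as $\aprop{B_1}{A} \dimpa$, i.e. $\aprop{B_1}{A} \limp A \limp A$) yields $(\aprop{C_1}{A} \dimpa) \limp (\aprop{B_2}{A} \dimpa) = \aprop{C}{A}$. Note we only need a single rewrite step, not several, because $\aprop{\cdot}{A}$ inserts $A$'s around subformulas but does not duplicate redices.

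\medskip

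Then the congruence statement follows formally: $\equiv_R$ is the reflexive–symmetric–transitive closure of $\rew_R$ (more precisely of one-step rewriting in both directions). Reflexivity of $\equiv_{\aprop{R}{A}}$ for $\aprop{B}{A}$ is immediate; for a symmetric step $C \rew_R B$ we get $\aprop{C}{A} \rew_{\aprop{R}{A}} \aprop{B}{A}$ from the first part, hence $\aprop{B}{A} \equiv_{\aprop{R}{A}} \aprop{C}{A}$; and transitivity is inherited. So if $B \equiv_R C$ via a finite zig-zag chain $B = D_0, D_1, \dots, D_n = C$ of one-step rewrites, applying the first part to each link and concatenating gives $\aprop{B}{A} \equiv_{\aprop{R}{A}} \aprop{C}{A}$.

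\medskip

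The main obstacle I anticipate is the substitution-commutation sublemma $\sigma(\aprop{F}{A}) = \aprop{\sigma F}{A}$, together with the side conditions on bound variables: \mydef{\ref{def:atrans-prop}} requires $A$ to be $B$-unbound, so one must make sure that, as the induction descends under quantifiers $\lfa x$ and $\lex x$, the copies of $A$ inserted do not clash with $x$ — but this is exactly where $R$-compatibility and closedness of $A$ pay off, since $A$ shares no symbols with $R$ and has no free variables, so $A$ is automatically $D$-unbound for every proposition $D$ occurring in the derivation, and $\alpha$-renaming handles the rest. Everything else is a routine structural induction, and no step requires more than a single rewrite in the translated system.
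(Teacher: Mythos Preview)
Your proposal is correct and follows essentially the same approach as the paper: structural induction on $B$ for the one-step rewrite, then induction on the derivation of $B \equiv_R C$ (your zig-zag chain) for the congruence. Your write-up is considerably more explicit than the paper's one-line proof; in particular you spell out the substitution-commutation sublemma $\sigma(\aprop{F}{A}) = \aprop{\sigma F}{A}$ and the role of $A$ being closed and $R$-compatible, which the paper leaves implicit.
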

\begin{proof}
By induction on the structure of $B$ for the first point, and on the
derivation of $B \equiv_R C$ for the second point. \qed{}
\end{proof}

\begin{proposition}
Assume that $A$ is $R$-compatible. If $R$ is a terminating
and confluent rewrite system\cite{TeReSe03} then so is
$\aprop{R}{A}$.
\end{proposition}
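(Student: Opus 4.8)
The plan is to prove the two halves separately, both by exploiting the structure of the $A$-translation together with the $R$-compatibility hypothesis, and leaning on \myprop{\ref{prop:atrans-rew}}.

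First I would handle confluence. The cleanest route is to observe that since $A$ is $R$-compatible, the predicate symbols $P_i$ on the left-hand sides of $\aprop{R}{A} = \{P_i \rew \aprop{A_i}{A}\}$ are exactly those of $R$, so in particular they do not occur in $A$; hence the left-hand sides of $\aprop{R}{A}$ form the same orthogonal (non-overlapping) set of atomic patterns as those of $R$. By the definition of proposition rewrite system, orthogonality already gives confluence, so this half needs only the remark that $A$-translation does not touch the left-hand sides and does not introduce overlaps. Alternatively, if one wants to argue from termination, confluence of a terminating system reduces to local confluence (Newman's lemma), and local confluence is immediate by orthogonality; I would mention whichever phrasing matches the paper's earlier conventions.

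For termination I would argue by contradiction: suppose $\aprop{R}{A}$ admits an infinite reduction sequence $C_0 \rew_{\aprop{R}{A}} C_1 \rew_{\aprop{R}{A}} \cdots$. The key observation is that every step rewrites some atom $P_i(t_1,\dots,t_n)$ to $\aprop{A_i}{A}$, and one wants to lift this to an infinite $R$-reduction. The natural device is a map $\theta$ that "erases" the $A$-translation decorations: since $A$ is $R$-compatible, a formula of the form $\aprop{B}{A}$ can be recognised and its $R$-skeleton $B$ recovered (the inserted subformulae $\dimpa$ are built from symbols of $A$, disjoint from those of $R$). One then shows that if $C \rew_{\aprop{R}{A}} C'$ with $C$ in the image of the translation, then $\theta(C) \rew_R \theta(C')$, producing an infinite $R$-sequence, contradicting termination of $R$. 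A subtlety is that the $C_k$ need not literally be translations $\aprop{B_k}{A}$ of formulae, because redexes $P_i$ can be arbitrary atoms appearing anywhere; but $R$-compatibility ensures these atoms are disjoint from $A$'s symbols, so the "decoration structure" is stable under reduction and $\theta$ is well-defined on all reducts of a translated formula. An even slicker alternative is to use \myprop{\ref{prop:atrans-rew}}: run the infinite $\aprop{R}{A}$-sequence and note that each single step $C_k \rew_{\aprop{R}{A}} C_{k+1}$ is, up to the translation machinery, an image of an $R$-step, so one reads off an infinite $R$-sequence directly.

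The main obstacle I anticipate is precisely the bookkeeping for termination: making rigorous that the $A$-decorations inserted by \mydef{\ref{def:atrans-prop}} can be parsed away unambiguously, and that the reducts of a translated formula stay in the domain where this parsing works. This is exactly where $R$-compatibility is indispensable — without the disjointness of symbols between $A$ and $R$, an inserted $A$ could itself contain a redex $P_i$, the decorations would interfere with the skeleton, and (as the $P \rew \top \land \top$ counterexample of \mysec{\ref{sec:sc-stab}} already shows for super-consistency) termination could genuinely fail. I would therefore isolate a small lemma stating that the set of formulae of the form $\aprop{B}{A}$ is closed under $\aprop{R}{A}$-reduction and carries a well-defined erasure to $R$-formulae commuting with reduction, and then termination of $\aprop{R}{A}$ follows immediately from termination of $R$; confluence, as noted, is free from orthogonality.
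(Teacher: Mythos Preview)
Your proposal is correct and lands on the same two-part structure as the paper, but the concrete devices differ slightly in each half, and it is worth seeing the trade-off.

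For termination, the paper avoids your erasure map $\theta$ altogether: given an $\aprop{R}{A}$-sequence $A_1 \rew A_2 \rew \cdots$, it simply builds a parallel $R$-sequence $A'_1 \rew A'_2 \rew \cdots$ starting from $A'_1 = A_1$ (the \emph{same} formula, not an erased one) by firing the corresponding $R$-rule at each step; $R$-compatibility guarantees that no redex ever sits inside an inserted copy of $A$, so the redex positions track. This sidesteps exactly the bookkeeping obstacle you flagged --- you never need to argue that reducts stay in the image of the translation or that $\theta$ is well-defined on them --- at the price of a slightly implicit bijection between redex occurrences in $A_k$ and $A'_k$. Your erasure approach also works, but the ``slicker alternative'' you mention via \myprop{\ref{prop:atrans-rew}} goes the wrong direction (that proposition produces $\aprop{R}{A}$-steps from $R$-steps, not conversely), so you would still need either $\theta$ or the paper's parallel-sequence trick.

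For confluence you take the shorter road: observing that $\aprop{R}{A}$ has the same atomic left-hand sides as $R$ and is therefore still orthogonal, hence confluent outright. The paper instead closes critical pairs explicitly --- pulling a divergence $C \leftarrow B \rightarrow D$ back to $R$, joining there, and pushing forward via \myprop{\ref{prop:atrans-rew}} --- and then invokes termination plus Newman's lemma. Your argument is more economical and does not need termination; the paper's has the minor virtue of exercising \myprop{\ref{prop:atrans-rew}} and of not re-inspecting the definition of ``proposition rewrite system''.
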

\begin{proof}
Consider a rewriting sequence $A_1 \rew_\aprop{R}{A} \cdots
\rew_\aprop{R}{A} A_n$. $A$ is $R$-compatible, so no proposition or
term appearing in $A$ can be rewritten. Thus we can define the
rewriting sequence $A'_1 \rew_{R} \cdots \rew_{R} A'_n$, starting at
$A'_1 = A_1$ by applying the same rules. This sequence must be
finite.

As for confluence, consider a critical pair $C\,
{}_\aprop{R}{A}\!\leftarrow B \rew_\aprop{R}{A} D$, with $B$ atomic. We
know that $B$ can be rewritten by the corresponding ``antecedent''
rules of $R$: $C_0\,{}_R\!\leftarrow B \rew_R D_0$, with $\aprop{C_0}{A}
= C$ and $\aprop{D_0}{A} = D$. Since $R$ is confluent, there exists
some proposition $E_0$ such that $C_0 \rew^*_R E_0\,{}^*_R\!\leftarrow
D_0$. We also have $C \rew^*_R \aprop{E_0}{A}\, {}^*_R\!\leftarrow D$ by
\myprop{\ref{prop:atrans-rew}}, and $\aprop{R}{A}$ has the diamond
  property \cite{TeReSe03}. Since it is terminating, it is confluent. \qed{}
\end{proof}

\begin{lemma}\label{lem:admissible}
The rules
\begin{tabular}{ccc}
\AXCm{\Gamma, C \vdash A}
\UICm{\Gamma, C \limp A \limp A \vdash A}
\DP & and &
\AXCm{\Gamma \vdash C}
\UICm{\Gamma, C \limp A \vdash A}
\DP
\end{tabular}
are derivable in intuitionistic sequent calculus modulo.
\end{lemma}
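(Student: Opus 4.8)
The plan is to construct explicit derivations in the intuitionistic sequent calculus modulo for each of the two rules, working bottom-up from the conclusion. Recall that in the intuitionistic calculus the right-hand side carries at most one proposition, so I must be careful that every sequent I write has exactly one formula on the right; this is the reason the two premises are stated the way they are. I will also freely use that the axiom rule is available up to $\equiv$ and that weakening is admissible.

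For the first rule, I start from $\Gamma, C \limp A \limp A \vdash A$ and I want to reduce it to $\Gamma, C \vdash A$. The natural move is a $\limp$-left on the formula $C \limp A \limp A$ (which is $C \limp (A \limp A)$ by left associativity). This splits into a left premiss $\Gamma, C \limp A \limp A, A \limp A \vdash A$ — wait, more precisely the $\limp$-l rule with $C' \equiv C \limp (A\limp A)$ gives premisses $\Gamma \vdash C$ and $\Gamma, A \limp A \vdash A$. The second is immediate: apply $\limp$-left to $A \limp A$, yielding $\Gamma \vdash A$ on one branch and $\Gamma, A \vdash A$ (an axiom) on the other — but $\Gamma \vdash A$ is not obviously available. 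So instead I should decompose more cleverly: use $\limp$-left on $C \limp (A\limp A)$ to get $\Gamma \vdash C$ and $\Gamma, A\limp A \vdash A$; on the first branch I do not have $\Gamma \vdash C$ in general. The fix is to first apply a $\limp$-left where the left branch reuses the hypothesis $C$; but $C$ is not in $\Gamma$. The cleanest route: build the derivation so that the premiss $\Gamma, C \vdash A$ feeds the branch obtained after instantiating $A \limp A$ with the $C$-derivation of $A$. Concretely: from $\Gamma, C \vdash A$, weaken to $\Gamma, C, A\limp A \vdash A$ is wrong direction; rather, from $\Gamma,C \vdash A$ I can cut (or use $\limp$-l) against $C \limp A \limp A$: apply $\limp$-left to $C\limp(A\limp A)$ with left premiss $\Gamma, C' \vdash C$ — not available. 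I will therefore present it as: apply $\limp$-left to $C \limp A \limp A$ producing premisses $\Gamma \vdash C$? No. The correct derivation is two nested $\limp$-left steps reading $C \limp A \limp A$ as a chain: the outer connective has antecedent $C$ and consequent $A \limp A$; $\limp$-l gives $\Gamma \vdash C$ and $\Gamma, A\limp A \vdash A$. I then note $\Gamma, A \limp A \vdash A$ follows by $\limp$-l on $A\limp A$ into $\Gamma \vdash A$ and axiom $A \vdash A$. Both open leaves $\Gamma \vdash C$ and $\Gamma \vdash A$ must be discharged using the single premiss $\Gamma, C \vdash A$; since that is impossible directly, the honest statement is that the rule is derivable meaning: given a derivation of $\Gamma, C \vdash A$, one obtains a derivation of the conclusion — and the way to use $\Gamma, C \vdash A$ is precisely on the branch where $A\limp A$ has been "fed" a proof of $A$ coming from $C$. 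So I restructure: apply $\limp$-left to the formula $C \limp A \limp A$ \emph{contracting first} so there are two copies, then one copy is used with the $C$-hypothesis. The real argument: $C \limp A \limp A, C \vdash A$ is derivable (modus ponens twice on a proof of $C$), and composing with $\Gamma, C \vdash A$ via a cut on $C$ — but cut is a rule of the calculus, so that is legitimate; then eliminate the now-superfluous hypothesis.

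Thus for the first rule I would: (i) derive $\Gamma, C \vdash A$ (the premiss); (ii) derive $\Gamma, C, C\limp A \limp A \vdash A$ from it by weakening and a $\limp$-left on $C\limp A\limp A$ whose branches are $\Gamma,C \vdash C$ (axiom) and $\Gamma, C, A\limp A \vdash A$, the latter by $\limp$-left on $A\limp A$ with branches $\Gamma, C \vdash A$ (the premiss, weakened) and axiom; (iii) cut on $C$ against the premiss $\Gamma, C \vdash A$ to remove the extra $C$, leaving $\Gamma, C\limp A\limp A \vdash A$. For the second rule, from $\Gamma \vdash C$, apply $\limp$-left to $C \limp A$, with branches $\Gamma \vdash C$ (the given premiss) and $\Gamma, A \vdash A$ (axiom, $A \equiv A$), concluding $\Gamma, C\limp A \vdash A$ directly — this one is immediate. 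The main obstacle is the bookkeeping for the first rule: making sure every sequent stays single-conclusioned and that the combination of $\limp$-left, weakening, and a cut (or, alternatively, contraction plus two $\limp$-left steps without any cut) is laid out correctly; once that derivation tree is written down, the rest is routine.
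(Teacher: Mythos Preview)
Your second derivation is fine, but for the first rule you have misread the parenthesisation. The paper explicitly declares $\limp$ to be \emph{left} associative (``$A \limp B \limp B$ reads $(A \limp B) \limp B$''), so $C \limp A \limp A$ stands for $(C \limp A) \limp A$, not for $C \limp (A \limp A)$ as you assumed. With your reading the rule is actually \emph{not} derivable: take $\Gamma$ empty and $C = A = P$ atomic; then $P \vdash P$ holds, while $P \limp (P \limp P) \vdash P$ does not, since $P \limp (P \limp P)$ is a tautology and gives no information about $P$. This explains why every attempt you made collapsed into needing $\Gamma \vdash C$ or $\Gamma \vdash A$ out of thin air, and why the final ``cut on $C$'' step cannot be completed (you would need $\Gamma \vdash C$, which you do not have).

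Under the intended reading $(C \limp A) \limp A$, the derivation is immediate and matches the paper's one-line proof. Apply $\limp$-l to the principal formula $(C \limp A) \limp A$ in the conclusion $\Gamma, (C \limp A) \limp A \vdash A$. In the intuitionistic version of the rule the two premisses are $\Gamma \vdash C \limp A$ and $\Gamma, A \vdash A$; the latter is an axiom (plus weakening), and the former follows from the given $\Gamma, C \vdash A$ by one application of $\limp$-r. No cut, no contraction, no bookkeeping issues: just $\limp$-l, $\limp$-r and axiom, exactly as the paper states.
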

\begin{proof}
Direct combination of $\limp$-l, $\limp$-r and axiom rules. \qed{}
\end{proof}

\begin{proposition} \label{prop:clas-to-int}
If the sequent $\G \vdash \D$ has a proof (with cuts) in the classical
sequent calculus modulo $R$ then $\aprop{\G}{A}, (\aprop{\D}{A}) \limp
A \vdash A$ has a proof (with cuts) in the intuitionistic sequent
calculus modulo $\aprop{R}{A}$.
\end{proposition}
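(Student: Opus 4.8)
The plan is to proceed by induction on the structure of the classical derivation of $\G \vdash \D$, using the standard A-translation bookkeeping: a classical sequent $\G \vdash \D$ is simulated by the intuitionistic sequent $\aprop{\G}{A}, (\aprop{\D}{A}) \limp A \vdash A$, where $(\aprop{\D}{A}) \limp A$ abbreviates the list $\aprop{D_1}{A} \limp A, \dots, \aprop{D_n}{A} \limp A$ (each right-hand-side proposition, translated, turned into a ``continuation'' $\limp A$). The key observation behind the translation is that the double-bar decoration $\dimpa$ inside \mydef{\ref{def:atrans-prop}} is precisely what makes the translated connectives behave like their classical counterparts under this simulation; the two admissible rules of \mylem{\ref{lem:admissible}} are exactly the glue needed to manage the extra $\limp A \limp A$ wrappers.

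First I would set up notation and recall that by \myprop{\ref{prop:atrans-rew}} the congruence $\equiv_R$ is faithfully transported to $\equiv_{\aprop{R}{A}}$, so every side condition of the form $C \equiv A \land B$ etc. in \myfig{\ref{fig:seq-cal}} survives translation; this is what lets the induction go through rule by rule without worrying about the modulo. Then I would walk through the rule groups. For the structural rules (weakening, contraction) and the axiom/cut rules the translation is routine: an axiom $A \vdash B$ with $A \equiv B$ becomes $\aprop{A}{A}, \aprop{B}{A} \limp A \vdash A$, which is an instance of the second admissible rule of \mylem{\ref{lem:admissible}} applied to the identity axiom; cut is simulated by a cut on the translated cut-formula (here it is important that the statement only claims preservation of provability with cuts, so we may freely introduce cuts). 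For the logical rules, each left rule of the classical calculus acts on $\aprop{\G}{A}$ and is handled by the corresponding intuitionistic left rule after peeling off the $\dimpa$ wrapper via \myprop{\ref{prop:eq-imp}}-style manipulations (at the proof level, via \mylem{\ref{lem:admissible}}); each right rule acts on one of the continuations $\aprop{D_i}{A} \limp A$ and is simulated by combining $\limp$-l with the translated introduction. The delicate cases are $\lor$-r (a single classical rule $\G \vdash A, B, \D$ producing $\G \vdash A \lor B, \D$), which must be simulated using that a disjunction continuation $\aprop{A \lor B}{A} \limp A = ((\aprop{A}{A}\dimpa) \lor (\aprop{B}{A}\dimpa)) \limp A$ entails both $\aprop{A}{A} \limp A$ and $\aprop{B}{A} \limp A$ intuitionistically, and the quantifier rules, where freshness of the eigenvariable in $\lfa$-r/$\lex$-l is preserved because $A$ is closed (indeed this is why $A$ is required closed and $R$-compatible, matching the ``$B$-unbound'' hypothesis of \mydef{\ref{def:atrans-prop}}).

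The main obstacle will be the $\limp$-left rule of the classical calculus, because it is the only place where the right context $\D$ genuinely interacts with a formula moving to the left: from $\G, B \vdash \D$ and $\G \vdash A, \D$ one concludes $\G, A \limp B \vdash \D$. After translation the conclusion's left side carries $\aprop{A \limp B}{A} = (\aprop{A}{A}\dimpa) \limp (\aprop{B}{A}\dimpa)$, and one must feed it an argument of type $\aprop{A}{A}\dimpa$ built from the second premiss (whose translation has $\aprop{A}{A} \limp A$ sitting among the continuations), then use its output $\aprop{B}{A}\dimpa$ together with the first premiss. Reconstructing the needed derivation cleanly is exactly what the first admissible rule of \mylem{\ref{lem:admissible}} is designed for — it converts a proof of $\G, C \vdash A$ into a proof of $\G, C \limp A \limp A \vdash A$, i.e. it lets a continuation be consumed by a doubly-wrapped formula. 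I would therefore expect the $\limp$-l case, together with the bookkeeping needed to keep the continuation list $(\aprop{\D}{A}) \limp A$ in sync across the two premisses (duplicating continuations, which is legitimate since contraction is available), to be where essentially all the real work lies; every other case is a mechanical application of \mylem{\ref{lem:admissible}} and the basic intuitionistic rules.
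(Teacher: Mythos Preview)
Your proposal is correct and follows essentially the same approach as the paper: induction on the classical derivation, using \myprop{\ref{prop:atrans-rew}} to transport the congruence and the admissible rules of \mylem{\ref{lem:admissible}} to strip the $\dimpa$ wrappers so that the induction hypothesis applies. The only difference is emphasis: the paper singles out $\lor$-r (where one must duplicate the continuation by a left contraction and use both $\lor_1$ and $\lor_2$) as the case deserving explicit mention, whereas you put the spotlight on $\limp$-l; both are handled by exactly the mechanism you describe, and your more detailed walkthrough is sound.
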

\begin{proof}
By an immediate induction we copy the structure of the proof of $\G
\vdash \D$, using \myprop{\ref{prop:atrans-rew}} to rewrite
propositions and the admissible rules of \mylem{\ref{lem:admissible}}
to remove the tail $A$s. This is the only hurdle to get back a sequent
of a shape that allows us to apply the induction hypothesis. 

Notice that, in the $\lor$-r case, we must apply once the
$\lor_1$ rule and once the $\lor_2$, which requires a contraction
on the left-hand side. \qed{}
\end{proof}

\begin{definition}
Let $\Gamma \vdash \Delta; A$ be an intuitionistic sequent. $\Delta$
contains at most one proposition and $\Delta; A$ stands for $A$ if
$\Delta$ is empty and $\Delta$ otherwise.

$\Gamma \vdash \Delta; A$ is said to represent a classical sequent
$A_1, \cdots, A_n \vdash B_1, \cdots, B_p$ if there exists a
one-to-one correspondence $\xi$ between $A_1, \cdots, A_n, B_1, \cdots
B_p$ and $\Gamma, \Delta$:
\begin{itemize}
\item if $\xi(A_i) \in \Gamma$ then $\xi(A_i) = \aprop{A_i}{A}$ or
  $\xi(A_i) = \aprop{A_i}{A} \limp A \limp A$
\item if $\xi(A_i) \in \Delta$ then $\xi(A_i) = \aprop{A_i}{A} \limp A$
\item if $\xi(B_i) \in \Gamma$ then $\xi(B_i) = \aprop{B_i}{A} \limp
  A$
\item if $\xi(B_i) \in \Delta$ then $\xi(B_i) = \aprop{B_i}{A}$ or
  $\xi(B_i) = \aprop{B_i}{A} \limp A \limp A$
\end{itemize}
\end{definition}

\begin{lemma}\label{lem:atrans-form}
Let $B$ be a proposition. Then $\aprop{B}{A}$ cannot be of the forms
$A$, $X \limp A$ and $X \limp A \limp A$.
\end{lemma}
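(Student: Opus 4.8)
The plan is a case analysis on the principal connective of $B$, reading off the shape of $\aprop{B}{A}$ from \mydef{\ref{def:atrans-prop}}. The one point to keep in mind throughout is that $\limp$ is left associative, so $\dimpa$ abbreviates ${}\limp A\limp A$ and the third forbidden form $X\limp A\limp A$ must be read as $(X\limp A)\limp A$; hence both of the forbidden \emph{implication} forms, $X\limp A$ and $(X\limp A)\limp A$, are implications whose codomain is exactly $A$. The second thing I would record up front is that whenever $B$ is \emph{not} atomic, $\top$ or $\bot$, the clauses of \mydef{\ref{def:atrans-prop}} insert the literal formula $A$ strictly inside $\aprop{B}{A}$ (via each $\dimpa$), so that $A$ is a \emph{proper} subformula of $\aprop{B}{A}$; since no formula is a proper subformula of itself, this already yields $\aprop{B}{A}\ne A$ for every such $B$.

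First I would dispatch the case where $B$ is not an implication, i.e. $B$ is atomic, $\top$, $\bot$, a conjunction, a disjunction, or a quantification. By \mydef{\ref{def:atrans-prop}}, $\aprop{B}{A}$ is then either $B$ itself (when $B$ is atomic, $\top$ or $\bot$) or a formula whose principal connective is $\land$, $\lor$, $\lfa$ or $\lex$; in either case $\aprop{B}{A}$ is not an implication, so it is neither of the form $X\limp A$ nor of the form $(X\limp A)\limp A$. For $\aprop{B}{A}\ne A$: if $B$ is atomic, $\top$ or $\bot$ then $\aprop{B}{A}=B$, which differs from $A$ under the standing hypotheses on $A$ (the translation is applied with $A$ not occurring in $B$); otherwise $B$ is compound and the remark of the previous paragraph applies directly.

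The only genuinely interesting case is $B=B_1\limp B_2$, where \mydef{\ref{def:atrans-prop}} gives
\[ \aprop{B}{A} = (\aprop{B_1}{A}\dimpa) \limp ((\aprop{B_2}{A}\limp A)\limp A), \]
an implication $U\limp V$ whose codomain is $V=(\aprop{B_2}{A}\limp A)\limp A$. The single observation that closes every subcase is that $V$ strictly contains the literal formula $A$ as a subformula, hence $V\ne A$. Consequently $\aprop{B}{A}$ cannot be of the form $X\limp A$, which would force $V=A$; it cannot be of the form $X\limp A\limp A=(X\limp A)\limp A$, which again forces $V=A$; and it cannot equal $A$, since $\aprop{B}{A}$ strictly contains $A$ (through $V$). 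This exhausts the cases.

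I do not expect a real obstacle: the argument is pure syntactic bookkeeping once the case split is in place, the only recurring ingredient being that a formula is never a proper subformula of itself. The one subtlety worth flagging is the left-associative reading of $X\limp A\limp A$ as $(X\limp A)\limp A$, which is precisely what lets the lone remark ``$V\ne A$'' in the implication case dispose of the $X\limp A$ and the $X\limp A\limp A$ forms uniformly; and, on the base case $B=A$ with $B$ atomic, the lemma tacitly relies on $A$ being chosen so that it does not occur among the propositions being translated.
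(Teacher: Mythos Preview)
Your case analysis is exactly the ``mere check of \mydef{\ref{def:atrans-prop}} according to the structure of $B$'' that the paper's one-line proof alludes to, so the approach is the same. Your flagging of the tacit assumption in the atomic base case (that $A$ does not literally coincide with $B$, $\top$ or $\bot$) is accurate and is indeed an implicit hypothesis supplied only by the surrounding context, where $A$ is treated as a fresh, frozen symbol.
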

\begin{proof}
A mere check of \mydef{\ref{def:atrans-prop}} according to the
structure of $B$. \qed{}
\end{proof}

\begin{proposition} \label{prop:int-to-clas}
Let $A$ be a proposition. Let $\G \vdash \Delta;
A$ be a sequent that represents $A_1, \cdots, A_n \vdash B_1, \cdots,
B_p$. If this sequent has a cut-free proof in the intuitionistic
sequent calculus modulo $\aprop{R}{A}$, {\em and no right-rule other
  than axiom apply on $A$} then the sequent $A_1, \cdots, A_n \vdash_R
B_1, \cdots, B_p$ has a cut-free proof in the classical sequent
calculus modulo $R$.
\end{proposition}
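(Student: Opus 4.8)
The plan is to proceed by induction on the cut-free intuitionistic proof of $\G \vdash \Delta; A$, translating each rule back to a rule (or small block of rules) of the classical sequent calculus modulo $R$. The invariant to maintain is the ``represents'' relation of the preceding definition together with the side condition that no right-rule other than \emph{axiom} acts on the distinguished formula $A$. First I would handle the \textbf{axiom} and the propositional/quantifier rules: when an intuitionistic rule decomposes a formula $\xi(A_i)$ or $\xi(B_i)$, \mylem{\ref{lem:atrans-form}} guarantees that this formula is genuinely an $A$-translate $\aprop{C}{A}$ (possibly with a tail $\limp A$ or $\limp A \limp A$), so the principal connective is the one coming from $C$ via \mydef{\ref{def:atrans-prop}}; I then apply the matching classical rule on $C$ in $A_1,\dots,A_n \vdash B_1,\dots,B_p$, using \myprop{\ref{prop:atrans-rew}} (and the confluence/termination of $\aprop{R}{A}$) to see that the congruence steps match up. The tails $\limp A$ and $\limp A \limp A$ are exactly what encodes ``this formula sits on the wrong side'', so decomposing e.g. $\aprop{B}{A} \limp A$ on the left with $\limp$-l produces a left premiss $\Gamma \vdash \aprop{B}{A}, \ldots$ and a right premiss whose distinguished formula is still $A$ — this is where the side condition is consumed and re-established.

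Next I would treat the \textbf{structural rules}. Left contraction and left weakening in the intuitionistic proof become the corresponding classical rules, after checking via the correspondence $\xi$ that the contracted/weakened formula maps to a classical formula on the appropriate side; since $\xi$ is one-to-one, a contraction on $\aprop{A_i}{A}$ (or a variant) on the left corresponds to having had two copies of $A_i$ on the classical left, so classical contr-l applies. Right weakening and right contraction in the intuitionistic calculus are only available on a single-formula right-hand side, and here the subtle point is the interaction with the distinguished $A$: because $\Delta;A$ collapses $\Delta$ to $A$ when $\Delta$ is empty, a ``weakening in'' of $A$ on the right, or a rule that would move $A$ from the $\Delta$ slot, must be shown either impossible (blocked by the side condition) or to correspond to a classical weakening. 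I would spell out that the only way $A$ itself can be touched on the right is by \emph{axiom}, precisely the hypothesis, so every other right action is on a formula of the form $\aprop{C}{A}$, $\aprop{C}{A}\limp A$, or $\aprop{C}{A}\limp A\limp A$, all decomposable by \mylem{\ref{lem:atrans-form}}.

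The \textbf{main obstacle} I expect is the bookkeeping around the side condition ``no right-rule other than axiom applies on $A$'': I need to verify that this property is genuinely inductive, i.e. that for every rule, each premiss that still carries $A$ in its $\Delta$ slot satisfies the same restriction, and that the rules which \emph{create} an $A$ on the right (namely decomposing $\aprop{B}{A}\limp A$ by $\limp$-l, or $\aprop{B}{A}\limp A\limp A$ by $\limp$-l) hand back exactly a sub-proof to which the induction hypothesis applies. In particular, for the implication cases one must check that when $\limp$-r fires intuitionistically on $\aprop{(B\limp C)}{A} = (\aprop{B}{A}\dimpa)\limp(\aprop{C}{A}\dimpa)$, unfolding it and the inner double negations produces premisses matching an application of classical $\limp$-r on $B \limp C$ together with the admissible moves of \mylem{\ref{lem:admissible}} read backwards; this is the converse direction to \myprop{\ref{prop:clas-to-int}} and is the technical heart.

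Finally, the root of the intuitionistic proof is $\G \vdash \Delta; A$ representing $A_1,\dots,A_n \vdash B_1,\dots,B_p$, so the induction yields a cut-free classical proof of $A_1,\dots,A_n \vdash_R B_1,\dots,B_p$ directly, which is the claim. I would close by remarking that cut-freeness is preserved because each intuitionistic rule is mapped to classical rules of the same kind (no cut is ever introduced), the admissible rules of \mylem{\ref{lem:admissible}} being themselves cut-free derivable combinations of $\limp$-l, $\limp$-r and axiom.
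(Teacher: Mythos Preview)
Your induction skeleton matches the paper's, and the roles you give to \myprop{\ref{prop:atrans-rew}} and to the side condition on $A$ are correct. The paper's argument is, however, considerably leaner than what you outline, and the simplification is worth absorbing.

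The point you do not exploit is what happens when the last intuitionistic rule decomposes a \emph{tail} formula, i.e.\ a $\limp$-l or $\limp$-r on something of the shape $\aprop{C}{A}\limp A$ or $\aprop{C}{A}\limp A\limp A$. In the paper, \emph{no classical rule is emitted at all} in this case: one of the premisses (the ``principal'' one) is again a sequent that represents the very same $A_1,\dots,A_n\vdash B_1,\dots,B_p$, possibly with one formula dropped on the right because the intuitionistic $\limp$-l overwrites $\Delta$; the other premiss is simply discarded. One applies the induction hypothesis to the principal premiss and patches with a single weak-r if needed. There is no ``reading \mylem{\ref{lem:admissible}} backwards'' and no unfolding of inner double negations: the whole purpose of the three-way flexibility in the definition of ``represents'' (allowing $\aprop{C}{A}$, $\aprop{C}{A}\limp A$, or $\aprop{C}{A}\limp A\limp A$) is precisely that a single intuitionistic step lands on another representing sequent.

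Correspondingly, when the rule acts on a genuine $\aprop{C}{A}$ (no tail), the paper just \emph{copies} that one rule to the classical side. For your example of $\limp$-r on $\aprop{(B\limp C)}{A}$, the single premiss $\Gamma,\,\aprop{B}{A}\limp A\limp A\vdash \aprop{C}{A}\limp A\limp A$ already represents $\dots,B\vdash C,\dots$, so one classical $\limp$-r suffices and there is nothing further to unfold. Note also that the paper invokes \mylem{\ref{lem:atrans-form}} only in the axiom case, to guarantee that the two active formulas have the same tail shape and hence the same underlying translated proposition; it is not used as a general shape-recognition device in the logical cases. Your plan would go through, but it carries more bookkeeping than necessary and misidentifies the ``technical heart'': the heart is the observation that tail-decompositions are classically invisible.
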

\begin{proof}
By induction on the intuitionistic proof of the sequent $\G \vdash \D;
A$, using \myprop{\ref{prop:atrans-rew}}:
\begin{itemize}
\item if the last rule is a logical rule applied to a proposition of
  the form $\aprop{A_i}{A}$ or $\aprop{B_i}{A}$, we copy this rule and
  apply the induction hypothesis.
\item If the last rule is a logical rule applied to a proposition of
  another form, it must be an $\limp$-l or a $\limp$-r rule. The
  sequent in the principal premiss is also a representation of the
  sequent $A_1, \cdots, A_n \vdash_R B_1, \cdots, B_p$ - potentially
  weakened by one proposition if $\Delta$ is not empty and a
  $\limp$-l rule was applied. So we just need to apply the induction
  hypothesis, potentially introducing a weak-r if necessary.
\item if the last rule is an axiom, we copy it. Copying an axiom rule
  is possible because, by \mylem{\ref{lem:atrans-form}}, the axiom rule
  can be only applied between propositions of the same nature, with
  no, a single, or two implications with $A$ at the head and the same
  $A$-translated proposition at the base.
\item if the last rule is a structural rule, we copy it on the side
  required by $\xi$ and apply induction hypothesis. \qed{}
\end{itemize}
\end{proof}

It is essential to assume that no rule apply on $A$ other than axiom,
otherwise the result fails; for instance the sequent $\vdash ; C \limp
C$ is intuitionistically provable while the empty sequent is not
classically provable.

\subsection{Cut Elimination in Classical Sequent Calculus Modulo}

\begin{theorem} \label{thm:sc-cutelim}
If a rewrite system $R$ is super-consistent the classical sequent
calculus modulo $R$ has the cut elimination property.
\end{theorem}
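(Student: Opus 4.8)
The plan is to chain together the results already established in the paper, using the $A$-translation as the bridge between the classical and intuitionistic worlds. Given a super-consistent rewrite system $R$, pick any closed $R$-compatible proposition $A$ — for concreteness one may take a fresh atom, or more carefully a proposition $B \limp B$ built from a predicate symbol not occurring in $R$, so that $R$-compatibility holds. By \mythm{\ref{thm:atrans-sc}}, the translated system $\aprop{R}{A}$ is then super-consistent. Invoking the results imported from \cite{PNM} (super-consistency yields a reducibility-candidate-valued model, hence normalization in natural deduction modulo, hence cut elimination in the intuitionistic sequent calculus modulo), we obtain that the intuitionistic sequent calculus modulo $\aprop{R}{A}$ enjoys cut elimination.

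Now suppose $\G \vdash_R \D$ has a proof with cuts in the classical sequent calculus modulo $R$; we want a cut-free one. First I would apply \myprop{\ref{prop:clas-to-int}} to get a proof with cuts of $\aprop{\G}{A}, (\aprop{\D}{A}) \limp A \vdash A$ in the intuitionistic sequent calculus modulo $\aprop{R}{A}$. By cut elimination there, this sequent has a \emph{cut-free} intuitionistic proof. The sequent $\aprop{\G}{A}, (\aprop{\D}{A}) \limp A \vdash A$ is of the representation shape of the definition preceding \mylem{\ref{lem:atrans-form}}: each $\aprop{A_i}{A}$ on the left and each $\aprop{B_j}{A} \limp A$ on the left, with the single right-hand proposition $A$ standing for an empty classical right context (take $\D; A = A$). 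Then \myprop{\ref{prop:int-to-clas}} converts this cut-free intuitionistic proof back into a cut-free proof of $\G \vdash_R \D$ in the classical sequent calculus modulo $R$, which is what we want.

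The one genuine obstacle is the side condition of \myprop{\ref{prop:int-to-clas}}: it requires that in the cut-free intuitionistic proof \emph{no right rule other than axiom applies on $A$}. This is exactly the point flagged after that proposition (the counterexample $\vdash\,;\,C\limp C$). I would discharge it by choosing $A$ carefully — taking $A$ to be an \emph{atomic} proposition $P_0$ whose predicate symbol occurs neither in $R$ nor in $\G, \D$. Since $P_0$ is atomic, the only right rule that can fire on an occurrence of $P_0$ in a sequent is the axiom rule (there is no logical connective to decompose, and $P_0 \not\equiv_{\aprop{R}{A}} \top$ because its symbol does not occur in $\aprop{R}{A}$, which rules out $\top$-r). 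One must also check the analogous left side is harmless — a $\bot$-l on $P_0$ is impossible for the same congruence reason — and that, because $\FV(A)=\emptyset$, $A$ is $B$-unbound for every relevant $B$ so all the translation machinery applies. Thus the side condition holds automatically for this choice of $A$, and the chain of implications goes through, establishing cut elimination for the classical sequent calculus modulo $R$. \qed{}
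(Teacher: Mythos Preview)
Your proof is correct and follows essentially the same route as the paper: translate the classical proof via \myprop{\ref{prop:clas-to-int}}, use super-consistency of $\aprop{R}{A}$ (from \mythm{\ref{thm:atrans-sc}}) to eliminate cuts intuitionistically, then come back via \myprop{\ref{prop:int-to-clas}}. The only noteworthy difference is how you discharge the side condition ``no right rule other than axiom on $A$'': the paper's primary argument tracks the specific proof through the normalization procedure, observing that \myprop{\ref{prop:clas-to-int}} introduces $A$ only in axioms and that normalization never exposes its structure; your argument instead fixes $A$ to be a fresh atom so that \emph{any} cut-free proof automatically satisfies the condition. The paper does mention this second route as an alternative (``freeze $A$ and view it as an atomic formula''), so your version is simply that remark made concrete.
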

\begin{proof}
Let $\G \vdash \D$ be a provable sequent in the classical sequent
calculus modulo $R$. Let $A$ be a proposition not containing any
predicate or function symbol of $R$. The sequent $\aprop{\G}{A},
\aprop{\D}{A} \limp A \vdash A$ has a proof in the intuitionistic
sequent calculus modulo $\aprop{R}{A}$ by
\myprop{\ref{prop:clas-to-int}} above. By \mythm{\ref{thm:atrans-sc}}, 
$\aprop{R}{A}$ is super-consistent. Therefore, by \mycor{4.1}
of \myprop{4.1} of, $\aprop{\G}{A}, \aprop{\D}{A} \limp A
\vdash A$ has a cut-free proof in the intuitionistic sequent
calculus.

Moreover, no rule on $A$ other than axiom is introduced:
\myprop{\ref{prop:clas-to-int}} introduces only axioms, that are
translated into axioms in natural deduction, and the structure of $A$
is therefore not exposed to any introduction or elimination
rules. Another argument is that we can ``freeze'' $A$ and view it
as an atomic formula in all the discussion above. So the proof cannot
use any information on $A$, since it is a generic parameter of the
theorem.

Consequently, by \myprop{\ref{prop:int-to-clas}} the sequent $\G
\vdash \D$ has a cut-free proof. \qed{}
\end{proof}

Note that the argument appeals to a normalization procedure of the
proof-terms of Natural deduction modulo, considering commutative cuts
(\mysec{3.6} of \cite{PNM}). Other cut elimination methods for Natural
deduction modulo (as the one of \cite{GDowOHer12}) do not apply since
they do not get rid of commutative cuts.

\section{Conclusion}

In \cite{PNM} $\aprop{R}{\bot}$ had to be
assumed to have a pre-model in order to show cut elimination for the
classical sequent calculus modulo $R$ (\mythm{4.1} of
\cite{PNM}). \cite{TVA} shows that it is sufficient to show
$\aprop{R}{\bot}$ to be super-consistent. We have shown here that we
can instead discuss the super-consistency of $R$ directly.

Our result is a priori more restrictive, since by
instantiating $A$ by $\bot$ we get the super-consistency of
$\aprop{R}{\bot}$ that in turn implies the existence of a pre-model
for $\aprop{R}{\bot}$. It is currently unknown whether all those
criteria are equivalent or not: can we, for instance, find a rewrite
system and a proposition $A$, such that $\aprop{R}{A}$ is
super-consistent while $R$ is not super consistent ? Does the
existence of a pre-model for $R$ entail super-consistency ? On the
good side, our criterion works directly on $R$ and avoids a
duplication of arguments: we now in one pass have normalization for
natural deduction modulo $R$ (\cite{TVA,PNM}) and cut elimination for
the classical sequent calculus, and bypass the need of two separate
pre-model (or super-consistency arguments) for $R$ and
$\aprop{R}{\bot}$. Moreover, super-consistency, by abstracting over
reducibility candidates, provides a certain ease of use.

We have also shown a general result, by $A$-translating rewrite
systems and semantics frameworks, instead of $\bot$-translating
them. For the proof of cut elimination, we believe that the latter,
better known as double-negation translation, would have been
sufficient, as in \cite{PNM}. But the work on $A$-translation bears a
more general character, that can be used for other applications.

Super-consistency appears to be the right criterion to deal
with when one wants to know about the cut elimination property of a
deduction modulo theory, as the property holds whatever the syntactic
calculus is. It would be interesting to see how the super-consistency
criterion extends to other first-order framework, like the calculus of
structures \cite{GUG} or $\lambda\Pi$-calculus modulo, that is at the
root of the {\rm Dedukti} proof-checker \cite{BA12}.

Whether we can widen the criterion and replace pseudo-Heyting
algebras by Heyting algebras in \mydef{\ref{def:sc}}, the idea being
to use {\em cut-admissibility} (through semantic completeness, in the
mood of \cite{OHerJLip08b} for instance) instead of normalization in
the proof of \mythm{\ref{thm:sc-cutelim}} is a conjecture. Analyzing 
\cite{ABruOHerCHou11,GDowOHer12} closely shows that cut-admissibility
results crucially depend on finding in the
interpretation of the atoms $P$ a syntactical version of
$P$ in the model formed out of
contexts/propositions. Super-consistency does not {\em directly}
allows this, due to the abstract construction of a generic model. This
appeals to a more informative structure, in both papers algebras of
sequents were introduced which happens to be only pseudo-Heyting
algebras.



\bibliographystyle{plain} 
\bibliography{SemanticTranslation}

\begin{thebibliography}{10}

\bibitem{BA12}
Mathieu Boespflug, Quentin Carbonneaux, and Olivier Hermant.
\newblock {The $\lambda\Pi$-Calculus Modulo as a Universal Proof Language}.
\newblock In {\em Proof Exchange for Theorem Proving (PxTP)}, pages 28--43,
  Manchester (UK), June 2012.

\bibitem{RB04}
Richard Bonichon.
\newblock {TaMeD: A Tableau Method for Deduction Modulo}.
\newblock In {\em International Joint Conference on Automated Reasoning
  (IJCAR)}, volume 3097 of {\em LNCS}, pages 445--459, Cork (Ireland), July
  2004. Springer.

\bibitem{BH06a}
Richard Bonichon and Olivier Hermant.
\newblock A semantic completeness proof for tableaux modulo.
\newblock {\em LPAR 2006}, pages 167--181, November 2006.

\bibitem{ABruOHerCHou11}
Alo\"{\i}s Brunel, Olivier Hermant, and Cl{\'e}ment Houtmann.
\newblock Orthogonality and boolean algebras for deduction modulo.
\newblock In C.-H.~Luke Ong, editor, {\em TLCA}, volume 6690 of {\em Lecture
  Notes in Computer Science}, pages 76--90. Springer, 2011.

\bibitem{GBur10}
Guillaume Burel.
\newblock Embedding deduction modulo into a prover.
\newblock In Anuj Dawar and Helmut Veith, editors, {\em CSL}, volume 6247 of
  {\em Lecture Notes in Computer Science}, pages 155--169. Springer, 2010.

\bibitem{GBurCKir10}
Guillaume Burel and Claude Kirchner.
\newblock Regaining cut admissibility in deduction modulo using abstract
  completion.
\newblock {\em Inf. Comput.}, 208(2):140--164, 2010.

\bibitem{TVA}
Gilles Dowek.
\newblock Truth values algebras and proof normalization.
\newblock In Thorsten Altenkirch and Conor McBride, editors, {\em TYPES},
  volume 4502 of {\em Lecture Notes in Computer Science}, pages 110--124.
  Springer, 2006.

\bibitem{HOLls}
Gilles Dowek, Th{\'e}r{\`e}se Hardin, and Claude Kirchner.
\newblock {HOL-$\lambda \sigma$} an intentional first-order expression of
  higher-order logic.
\newblock {\em Mathematical Structures in Computer Science}, 11(1):21--45,
  2001.

\bibitem{GDowOHer12}
Gilles Dowek and Olivier Hermant.
\newblock A simple proof that super-consistency implies cut elimination.
\newblock {\em Notre-Dame Journal of Formal Logic}, 53(4):439--456, 2012.

\bibitem{PNM}
Gilles Dowek and Benjamin Werner.
\newblock Proof normalization modulo.
\newblock {\em The Journal of Symbolic Logic}, 68(4):1289--1316, 2003.

\bibitem{ARITH}
Gilles Dowek and Benjamin Werner.
\newblock Arithmetic as a theory modulo.
\newblock In J{\"u}rgen Giesl, editor, {\em RTA}, volume 3467 of {\em Lecture
  Notes in Computer Science}, pages 423--437. Springer, 2005.

\bibitem{FRIEDMAN}
Harvey Friedman.
\newblock Classically and intuitionistically provably recursive functions.
\newblock In Gert~H. M{\"{u}}ller and Dana~S. Scott, editors, {\em Higher Set
  Theory}, volume 669 of {\em Lecture Notes in Mathematics}, pages 21--27.
  Springer Berlin Heidelberg, 1978.

\bibitem{GEN}
Gerhard Gentzen.
\newblock Die widerspruchsfreiheit der reinen zahlentheorie.
\newblock {\em Mathematische Annalen}, 112:493:565, 1936.

\bibitem{GOD}
Kurt G{\"{o}}del.
\newblock Zur intuitionistischen arithmetik und zahlentheorie.
\newblock {\em Ergebnisse eines mathematischen Kolloquiums}, 4:34--38, 1933.

\bibitem{GUG}
Alessio Guglielmi.
\newblock A system of interaction and structure.
\newblock {\em ACM Trans. Comput. Log.}, 8(1):1--64, 2007.

\bibitem{OHer05a}
Olivier Hermant.
\newblock Semantic cut elimination in the intuitionistic sequent calculus.
\newblock In Pawel Urzyczyn, editor, {\em Typed Lambda-Calculi and
  Applications}, volume 3461 of {\em LNCS}, pages 221--233, Nara, Japan, 2005.
  Springer.

\bibitem{OHerJLip08b}
Olivier Hermant and James Lipton.
\newblock A constructive semantic approach to cut elimination in type theories
  with axioms.
\newblock In Michael Kaminski and Simone Martini, editors, {\em CSL}, volume
  5213 of {\em Lecture Notes in Computer Science}, pages 169--183. Springer,
  2008.

\bibitem{JA12}
M\'elanie Jacquel, Karim Berkani, David Delahaye, and Catherine Dubois.
\newblock {Tableaux Modulo Theories using Superdeduction: An Application to the
  Verification of {\textsf{B}} Proof Rules with the {\textsf{Zenon}} Automated
  Theorem Prover}.
\newblock In {\em International Joint Conference on Automated Reasoning
  (IJCAR)}, volume 7364 of {\em LNCS}, pages 332--338, Manchester (UK), June
  2012. Springer.

\bibitem{TeReSe03}
TeReSe.
\newblock {\em Term Rewriting Systems}, volume~55 of {\em Cambridge Tracts in
  Theoretical Computer Science}.
\newblock Cambridge University Press, 2003.

\bibitem{ATroDvDal88}
Anne~Sjerp Troelstra and Dirk van Dalen.
\newblock {\em Constructivism in Mathematics, An Introduction}.
\newblock North-Holland, 1988.

\end{thebibliography}
\end{document}